\title{Lower Bounds on Error Exponents via a New Quantum Decoder}
\author[1,2]{Salman Beigi}
\author[2,3]{Marco Tomamichel}
\affil[1]{\it \footnotesize School of Mathematics, Institute for Research in Fundamental Sciences (IPM), Tehran, Iran}
\affil[2]{\it \footnotesize Centre for Quantum Technologies, National University of Singapore, Singapore}
\affil[3]{\it \footnotesize Department of Electrical and Computer Engineering, National University of Singapore, Singapore}
\date{}
\newcommand{\cX}{\mathcal{X}}
\newcommand{\cM}{\mathcal{M}}
\newcommand{\cC}{\mathcal{C}}
\newcommand{\cW}{\mathcal{W}}
\newcommand{\cN}{\mathcal{N}}
\newcommand{\cH}{\mathcal{H}}
\newcommand{\cP}{\mathcal{P}}
\newcommand{\cE}{\mathcal{E}}
\newcommand{\E}{\mathbb{E}}
\renewcommand{\d}{\textnormal{d}}
\newcommand{\proj}[1]{| #1 \rangle\langle #1 |}
\newcommand{\ket}[1]{|#1\rangle}
\newcommand{\bra}[1]{\langle#1|}
\newcommand{\ketbra}[2]{|#1\rangle\langle#2|}
\newcommand{\altfrac}[2]{#1:#2}
\newtheorem{theorem}{Theorem}
\newtheorem{lemma}[theorem]{Lemma}
\DeclareMathOperator{\tr}{tr}
\DeclareMathOperator{\spec}{spec}
\DeclareMathOperator{\poly}{poly}
\begin{document}

\maketitle

\begin{abstract}
    We introduce a new quantum decoder based on a variant of the pretty good measurement, but defined via an alternative matrix quotient. We then use this novel decoder to derive new lower bounds on the error exponent both in the one-shot and asymptotic regimes for the classical-quantum and the entanglement-assisted channel coding problems. Our bounds are expressed in terms of measured (for the one-shot bounds) and sandwiched (for the asymptotic bounds) channel R\'enyi mutual information of order between $1/2$ and $1$. The bounds are not comparable with some previously established bounds for general channels, yet they are tight (for rates close to capacity) when the channel is classical. Finally, we also use our new decoder to rederive Cheng's recent tight bound on the decoding error probability, which implies that most existing asymptotic results also hold for the new decoder.
\end{abstract}

\section{Introduction}

Coding over classical-quantum (cq) channels is generally very well-understood. Such channels are characterised by a mapping $\cW: x \mapsto \rho_x$ for an ensemble of quantum states $\{ \rho_x:\, x \}$. The capacity $C(\cW)$ of such a channel, established by Holevo~\cite{holevo98,holevo73b} and Schumacher-Westmoreland~\cite{schumacher97}, is equal to the divergence radius of the channel output,
\begin{align}\label{eq:C-chi-def}
    C(\cW) = \chi(\cW) := \min_{\sigma} \max_x D( \rho_x \| \sigma ),
\end{align}
where the minimisation is over quantum states $\sigma$ and $D(\cdot\|\cdot)$ is the Umegaki relative entropy. The capacity determines how much information can be transmitted reliably over the channel in the asymptotic limit of infinite uses of the channel. The focus has since shifted to the more practical question of finding the fundamental trade-offs between the rate of a code and the incurred error in the transmission. This has been settled in various regimes, for instance by establishing the strong converse property~\cite{winter99} and its exponent~\cite{mosonyi14-2}, as well as the expansion of the maximal coding rate in the small deviation~\cite{tomamicheltan14} and the moderate deviation~\cite{cheng17,chubb17} regimes. Thus, one might think that we completely understand the fundamental limits of coding over cq channels. Not quite! The missing piece in our understanding is the error exponent for communication rates $R$ below but close to the capacity. The maximal (negative) exponent of the decoding error at a rate $R$ is denoted as $E(R)$, so the decoding error behaves as $\exp(-n E(R) + \theta(n))$ for optimal codes at rate $R$. (See Section~\ref{subsec:cq-channel-coding-asymptotic} for a formal definition of $E(R)$.)
For classical channels, the error exponent (or reliability function) can be bounded from both sides as~\cite{shannon59,shannon67,fano61}
\begin{align}
     \sup_{\alpha \in [\frac12, 1)}  \frac{1-\alpha}{\alpha} \big( {\chi}_{\alpha}(\cW) - R \big)\leq E(R) \leq \sup_{\alpha \in (0, 1)}  \frac{1-\alpha}{\alpha} \big( {\chi}_{\alpha}(\cW) - R \big)   , \label{eq:thisform}
\end{align}
where $\chi_\alpha(\cW)$ is the $\alpha$-R\'enyi divergence radius defined similarly to~\eqref{eq:C-chi-def} but in terms of the $\alpha$-R\'enyi divergence~\cite{renyi61}.
The bound of equation~\eqref{eq:thisform} is tight for values of $R$ close to the capacity (above some critical rate) since then the supremum on both sides is taken by some $\alpha \geq \frac12$. 

Let us now consider the setting of coding over cq channels. On the one hand, an \emph{upper} bound on $E(R)$ of the same form as in~\eqref{eq:thisform} was shown by Dalai~\cite{dalai13} (see also~\cite{cheng18} for refinements), and is called the sphere-packing bound. Here the $\alpha$-R\'enyi divergence radius is replaced by $\widebar{\chi}_\alpha(\cW)$, its Petz variant in terms of Petz $\alpha$-R\'enyi divergence:
\begin{align}
    \widebar{\chi}_{\alpha}(\cW) = \min_{\sigma} \max_x \widebar{D}_{\alpha}(\rho_x \| \sigma), \quad \text{where} \quad \widebar{D}_{\alpha}(\rho\|\sigma) = \frac{1}{\alpha-1} \log \tr[\rho^{\alpha}\sigma^{1-\alpha}]\,.
\end{align}
The sphere-packing bound is then written as
\begin{align}
    E(R) \leq \sup_{\alpha \in (0, 1)}  \frac{1-\alpha}{\alpha} \big( \widebar{\chi}_{\alpha}(\cW) - R \big) \label{eq:thisform2}
\end{align}
and is expected to be tight for general states (at  rates above a critical rate). In fact, it is known to be tight for cq channels that only output pure states~\cite{burnashev97} and for some highly symmetric channels~\cite{renes23}. 

On the other hand, the only known \emph{lower} bound on the error exponent for general channels was established by Hayashi in~\cite{hayashi07} and takes the form
\begin{align}
    E(R) \geq \sup_{\alpha \in (0, 1)} (1-\alpha) \big( \max_{P_X} \widebar{D}_{\alpha}(\rho_{XB} \| \rho_X \otimes \rho_B) - R \big), \label{eq:hayashi}
\end{align}
where $\rho_{XB} = \sum_x P(x)\proj{x}_X \otimes \rho_B^x$ with $\rho_B^x = \rho_x$ is the joint state of the channel input and output systems.
However, while this bound is tight for cq channels that output only pure states, it is not equal to the upper bound in~\eqref{eq:thisform2} for general cq channels. In particular, it is not tight for general classical channels, characterised by a set of mutually commuting states $\rho_x$. (This is not obvious. See Appendix~\ref{app:hayashi} for a comparison of the two bounds.)

The impasse for further progress on finding tighter lower bounds for general cq channels seems to be that we are lacking sufficiently good techniques to analyse common decoders for cq channels in such a way that only multiplicative factors arise in the errors bounds. Here, we remedy this by introducing a new class of decoders that are arguably easier to analyse in this setting, and which will eventually lead us to a lower bound of the form~\eqref{eq:thisform} that is tight in the commutative case.

A similar situation as for classical-quantum coding persists with the study of entanglement-assisted channel coding, for which the capacity was first established by Bennett \emph{et al.}~\cite{bennett02}. More recent works have explored the trade-offs between the error and the coding rate in the small deviation~\cite{datta15b}\footnote{For entanglement-assisted channel coding in the small deviation regime currently only lower bounds on the achievable rates could be derived for general channels. We expect them to be tight but a converse is still missing.} and moderate deviation~\cite{ramakrishnan23} regimes. The strong converse property~\cite{gupta13} and strong converse exponent~\cite{gupta13,liyao22} have also recently been found.
However, again the error exponent for rates below the capacity is not yet fully understood. Lower bounds on the error exponent of the form of~\eqref{eq:hayashi} have been established in~\cite{qi18}. We will use our new decoder to find another, incomparable lower bound on this error exponent as well. Our bound is also tight in the case of classical channels.

\medskip
In the following section we fix some notations and introduce certain quantum R\'enyi quantities required in the sequel. Section~\ref{sec:c-q} includes our lower bound on the error exponent for the cq channel coding problem. We also introduce our new decoder in this section. 
Our results on entanglement-assisted channel coding generalise this analysis and are presented in Section~\ref{sec:EA}. We conclude with a discussion in Section~\ref{sec:conclusion}.

\section{Notation and preliminaries}

The logarithm function $\log(\cdot)$ is taken to an arbitrary basis (particularly in base $2$ if we measure capacities in bits per channel use) and $\exp(\cdot)$ is its inverse. We consider the algebra of operators acting on finite-dimensional Hilbert spaces. We use ``$\geq$'' to denote the L\"owner order, i.e., we write $A \geq B$ for two operators $A$ and $B$ to say that $A - B$ is positive semi-definite and $A > B$ to say that $A - B$ is positive definite. A \emph{quantum state} is a positive semi-definite operator with unit trace. We use $1$ to denote the identity operator and $\| \cdot \|$ to denote the operator norm. When considering multi-partite systems, subscripts are used to indicate which subsystems operators act on. We write $A \sim A'$ to indicate that the corresponding Hilbert spaces of two quantum systems are isomorphic. A \emph{quantum channel} is a completely positive trace-preserving (cptp) map from quantum states on some quantum system to quantum states on a potentially different quantum system.

\subsection{R\'enyi divergences}

For two quantum states $\rho$ and $\sigma$ that are not orthogonal, the \emph{sandwiched R\'enyi divergence} of order $\alpha \in \big[\frac12,1\big)$ is defined as~\cite{lennert13,wilde13}
\begin{align}
    \widetilde{D}_{\alpha}(\rho\|\sigma) := \frac{1}{\alpha - 1}\log \widetilde{Q}_{\alpha}(\rho\|\sigma) , \quad
    \widetilde{Q}_{\alpha}(\rho\|\sigma) := \tr \left[ \left( \sigma^{\frac{1-\alpha}{2 \alpha}} \rho \sigma^{\frac{1-\alpha}{2 \alpha}} \right)^{\alpha} \right].
\end{align}
The same definition also applies for $\alpha > 1$, as long as the support of $\rho$ is contained in the support of $\sigma$; but our focus here is only on the regime where $\alpha < 1$. We note that we have a continuous extension at $\alpha=1$:
\begin{align}
    \lim_{\alpha \to 1} \widetilde{D}_{\alpha}(\rho\|\sigma) = D(\rho\|\sigma) := \tr \big[ \rho ( \log \rho - \log \sigma)] ,
\end{align}
which gives Umegaki's relative entropy.
We will also use \emph{measured relative entropies,} defined as
\begin{align}
    \widecheck{D}_{\alpha}(\rho\|\sigma) := \sup_{ \{ \Lambda_j:\, j \}}  \frac{1}{\alpha - 1} \log \left( \sum_j \left( \tr[\rho \Lambda_j] \right)^{\alpha} \left( \tr[\sigma \Lambda_j] \right)^{1-\alpha} \right) ,
\end{align}
where we optimise over sets of orthogonal projectors $\{ \Lambda_j:\, j \}$ partitioning the identity: $\sum_j \Lambda_j=1$. Similarly as before, we define 
\begin{align}
\widecheck{Q}_{\alpha}(\rho\|\sigma) := \exp\big((\alpha -1) \widecheck{D}_{\alpha}(\rho\|\sigma) \big).
\end{align}
It was shown in~\cite[Theorem 4]{berta17} that optimising instead over general positive operator-valued measurements leads to the same quantity, and thus we can restrict ourselves to projective measurements here without loss of generality.

We have $\widebar{D}_{\alpha}(\rho\|\sigma) \geq \widetilde{D}_{\alpha}(\rho\|\sigma)$ as shown in~\cite{datta14} and $\widetilde{D}_{\alpha}(\rho\|\sigma) \geq \widecheck{D}_{\alpha}(\rho\|\sigma)$ by the data-processing inequality for the sandwiched R\'enyi divergence~\cite{frank13,beigi13}. Both inequalities become equalities if and only if the states $\rho$ and $\sigma$ commute~\cite{hiai94,berta17}. It is shown in~\cite{mosonyiogawa13} and~\cite[Corollary 4]{hayashitomamichel16} that for any $\alpha \geq \frac12$,
\begin{align}
    \lim_{n \to \infty} \frac{1}{n} \widecheck{D}_{\alpha}\big(\rho^{\otimes n} \big\|\sigma^{\otimes n} \big) = \widetilde{D}_{\alpha}(\rho\|\sigma) \,. \label{eq:measured-to-sandwich}
\end{align}
We will also use a variational expression of $\widecheck{Q}_{\alpha}(\rho\|\sigma)$ proven in~\cite[Lemma 3]{berta17} stating that for $\alpha \in [\frac12, 1)$, we have
\begin{align}
    \widecheck{Q}_{\alpha}(\rho \| \sigma) 
    &= \inf_{Y > 0} \left( \tr\left[ \rho Y^{\frac{\alpha - 1}{\alpha}} \right] \right)^{\alpha} \left( \tr\left[ \sigma Y \right] \right)^{1-\alpha} \\
    &= \inf_{Y > 0 } \alpha \tr\left[ \rho Y^{\frac{\alpha - 1}{\alpha}} \right]  + (1-\alpha) \tr\left[ \sigma Y \right]  \label{eq:variational} \,.
\end{align}

\subsection{R\'enyi mutual information and divergence radius}

We will express our results in terms of a quantum R\'enyi mutual information. For any bipartite state $\rho_{AB}$, the \emph{sandwiched R\'enyi mutual information} of order $\alpha \in \big[\frac12, 1\big)$ is defined as
\begin{align}
    \widetilde{I}_{\alpha}(A ; B) := \min_{ \sigma_B \geq 0 \atop \tr[\sigma_B] = 1} \widetilde{D}_{\alpha}\big(\rho_{AB} \big\| \rho_A \otimes \sigma_B \big) \,.
\end{align}
It was recently shown that sandwiched R\'enyi mutual information attains operational meanings in some strong converse exponents~\cite{li22} for the range of $\alpha < 1$. 
We can define the same quantity using the measured R\'enyi divergence instead, i.e.,
\begin{align}
    \widecheck{I}_{\alpha}(A ; B ) := \min_{ \sigma_B \geq 0 \atop \tr[\sigma_B] = 1} \widecheck{D}_{\alpha}\big(\rho_{AB} \big\| \rho_A \otimes \sigma_B \big) \,.
\end{align}

Next, we define the \emph{sandwiched and measured R\'enyi divergence radius} of order $\alpha \geq \frac12$ of a set of quantum states $\cW = \{ \rho_x:\, x \in \cX \}$ as
\begin{align}\label{eq:div-radius}
    \widetilde{\chi}_\alpha(\cW) := \min_{\sigma \geq 0 \atop \tr[\sigma] = 1} \max_{x \in \cX}\ \widetilde{D}_{\alpha}(\rho_x \| \sigma) , \quad \textnormal{and} \quad
    \widecheck{\chi}_\alpha(\cW) := \min_{\sigma \geq 0 \atop \tr[\sigma] = 1} \max_{x \in \cX}\ \widecheck{D}_{\alpha}(\rho_x \| \sigma),
\end{align}
respectively. The \emph{sandwiched and measured R\'enyi channel mutual information} of order $\alpha$ are defined by
\begin{align}
    \widetilde{I}_{\alpha}(\cW) := \max_{P_X} \widetilde{I}_{\alpha}(X ; B)_{\rho} \quad \textnormal{and} \quad \widecheck{I}_{\alpha}(\cW) := \max_{P_X} \widecheck{I}_{\alpha}(X ; B)_{\rho}\,,
\end{align}
respectively, where $\rho_{XB} = \sum_x P_X(x) \proj{x}_X \otimes \rho_B^x$ with $\rho_B^x = \rho_x$.
These quantities have several alternative expressions and have been studied extensively, for example in~\cite{mosonyi14-2}, where an operational meaning in terms of the strong converse exponent in cq channel coding was found when $\alpha > 1$. In particular, it is shown in~\cite[Proposition 4.2]{mosonyi14-2} that
\begin{align}
    \widetilde{\chi}_\alpha(\cW) = \widetilde{I}_{\alpha}(\cW)\,.  \label{eq:equivalent}
\end{align}
We can prove a similar equivalence also for measured quantities.

\begin{lemma} \label{lem:measured-radius}
    For any $\alpha \geq \frac12$ it holds that $\widecheck{\chi}_\alpha(\cW) = \widecheck{I}_{\alpha}(\cW)$.
\end{lemma}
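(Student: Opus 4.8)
The plan is to mirror the proof of the sandwiched identity~\eqref{eq:equivalent} and to reduce the claim to a single application of Sion's minimax theorem. I will carry out the argument for $\alpha \in [\tfrac12, 1)$, where the variational formula~\eqref{eq:variational} is available and $\frac{1}{\alpha-1} < 0$; the boundary case $\alpha = 1$ then follows by a continuity/limiting argument in $\alpha$, and the case $\alpha > 1$ by the same reasoning with the signs of $\frac{1}{\alpha-1}$ reversed and superadditivity replaced by subadditivity.

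The first and principal step is to establish the structural identity that, for every input distribution $P_X$ and every state $\sigma$ on $B$,
\begin{align}
    \widecheck{Q}_{\alpha}(\rho_{XB} \| \rho_X \otimes \sigma) = \sum_{x} P_X(x)\, \widecheck{Q}_{\alpha}(\rho_x \| \sigma), \label{eq:cq-decomp}
\end{align}
where $\rho_{XB} = \sum_x P_X(x) \proj{x}_X \otimes \rho_x$. Recall that for $\alpha < 1$ the variational/measurement characterisation reads $\widecheck{Q}_{\alpha}(\rho\|\sigma) = \inf_{\{\Lambda_j\}} \sum_j (\tr[\rho\Lambda_j])^{\alpha}(\tr[\sigma\Lambda_j])^{1-\alpha}$, where the infimum is over POVMs (using~\cite[Theorem 4]{berta17}). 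For ``$\leq$'' I would pick, for each $x$, a POVM $\{N^x_j\}_j$ on $B$ near-optimal for $\widecheck{Q}_{\alpha}(\rho_x\|\sigma)$ and plug the block-diagonal POVM $\{\proj{x}_X \otimes N^x_j\}_{x,j}$ into the infimum defining $\widecheck{Q}_{\alpha}(\rho_{XB} \| \rho_X \otimes \sigma)$, which evaluates the defining sum to $\sum_x P_X(x)\sum_j(\tr[\rho_x N^x_j])^{\alpha}(\tr[\sigma N^x_j])^{1-\alpha} \to \sum_x P_X(x)\widecheck{Q}_{\alpha}(\rho_x\|\sigma)$. For ``$\geq$'' I would take an arbitrary POVM $\{\Lambda_j\}$ on $XB$, pinch it in the $X$-register by setting $N^x_j := \bra{x}\Lambda_j\ket{x}$ (a POVM on $B$ for each $x$), and use $\bra{x}\rho_{XB}\ket{x} = P_X(x)\rho_x$ together with $\bra{x}(\rho_X\otimes\sigma)\ket{x} = P_X(x)\sigma$. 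Since $(p,q)\mapsto p^{\alpha}q^{1-\alpha}$ is concave and positively homogeneous of degree one, hence superadditive, each term obeys $(\tr[\rho_{XB}\Lambda_j])^{\alpha}(\tr[(\rho_X\otimes\sigma)\Lambda_j])^{1-\alpha} \geq \sum_x P_X(x)(\tr[\rho_x N^x_j])^{\alpha}(\tr[\sigma N^x_j])^{1-\alpha}$; summing over $j$ and taking the infimum over $\{\Lambda_j\}$ gives ``$\geq$''. (Alternatively, \eqref{eq:cq-decomp} can be read off~\eqref{eq:variational} by showing the optimal $Y$ may be taken block-diagonal in $X$.) I expect this decomposition to be the main obstacle, chiefly in keeping the direction of the homogeneity/superadditivity argument consistent with the sign of $\frac{1}{\alpha-1}$.

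Given~\eqref{eq:cq-decomp}, I would set $h(P_X,\sigma) := \sum_x P_X(x)\,\widecheck{Q}_{\alpha}(\rho_x\|\sigma)$. Because $t\mapsto \frac{1}{\alpha-1}\log t$ is strictly decreasing for $\alpha < 1$, minimising a measured divergence over $\sigma$ corresponds to maximising $h$, while a maximum of measured divergences over $x$ corresponds to minimising the linear functional $h(\cdot,\sigma)$ over the simplex (whose extrema are point masses). Tracking these monotone reparametrisations yields
\begin{align}
    \widecheck{\chi}_\alpha(\cW) = \frac{1}{\alpha-1}\log\Big(\max_{\sigma}\min_{P_X} h(P_X,\sigma)\Big), \qquad \widecheck{I}_{\alpha}(\cW) = \frac{1}{\alpha-1}\log\Big(\min_{P_X}\max_{\sigma} h(P_X,\sigma)\Big),
\end{align}
so that, since $\frac{1}{\alpha-1}\log$ is injective, the lemma is equivalent to the minimax identity $\max_{\sigma}\min_{P_X} h = \min_{P_X}\max_{\sigma} h$.

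Finally I would verify the hypotheses of Sion's minimax theorem for $h$ on the product of the states on $B$ and the probability simplex on $\cX$, both compact and convex. The function $h$ is linear, hence quasi-convex, in $P_X$; and for fixed $P_X$ it is concave in $\sigma$, since by~\eqref{eq:variational} each $\widecheck{Q}_{\alpha}(\rho_x\|\cdot)$ is an infimum of functions affine in $\sigma$, making $h(P_X,\cdot)$ concave and upper semicontinuous. Sion's theorem then delivers the required exchange of the optimisations over $\sigma$ and $P_X$, establishing $\widecheck{\chi}_\alpha(\cW) = \widecheck{I}_{\alpha}(\cW)$ and completing the proof.
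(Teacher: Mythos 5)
Your proposal is correct and follows essentially the same route as the paper: reduce both quantities to optimisations of $\sum_x P_X(x)\,\widecheck{Q}_{\alpha}(\rho_x\|\sigma)$ via the cq decomposition of $\widecheck{Q}_{\alpha}$, then swap the order using Sion's minimax theorem (linearity in $P_X$, concavity in $\sigma$ from the variational formula~\eqref{eq:variational}). The only difference is that you spell out the block-diagonal/pinching argument for the decomposition in detail, which the paper states more briefly when justifying~\eqref{eq:variational-cq-2}.
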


\begin{proof}
    Let us first consider the case $\alpha \in [\frac12, 1)$ that is of most interest here. We note that
    \begin{align}
        \max_{P_X} \widecheck{I}_{\alpha}(X ; B)_{\rho} = \frac1{\alpha -1} \log \left( \min_{P} \max_{\sigma \geq 0 \atop \tr[\sigma] = 1} \sum_x P(x) \widecheck{Q}_{\alpha}(\rho_x \| \sigma) \right). 
    \end{align}
    Since the expression $\sum_x P(x) \widecheck{Q}_{\alpha}(\rho_x \| \sigma)$ is linear in $P$ and concave in $\sigma$, we apply Sion's minimax theorem~\cite{sion58} to exchange the optimisations. Finally, we observe that 
    \begin{align}
        \min_P\ \sum_x P(x) \widecheck{Q}_{\alpha}(\rho_x \| \sigma) = \min_x \widecheck{Q}_{\alpha}(\rho_x \| \sigma).
    \end{align}
    This yields the desired result.    
    The cases $\alpha = 1$ and $\alpha > 1$ follow similarly.
\end{proof}

For general quantum channels $\cN$ from $A$ to $B$, the \emph{sandwiched R\'enyi channel mutual information} of order $\alpha$ is defined as
\begin{align}
    \widetilde{I}_{\alpha}(\cN) := \max_{\rho_A} \widetilde{I}_{\alpha}(A' ; B)_{\tau}, \quad \textnormal{where} \quad \tau_{A'B} = \mathcal{I} \otimes \cN \left( \proj{\rho}_{A'A} \right)\,,
\end{align}
and $|\rho\rangle_{A'A}$ is any purification of $\rho_A$ on an isomorphic system $A' \sim A$. Similarly, the \emph{measured R\'enyi channel mutual information} of order $\alpha$ is defined as
\begin{align}
    \widecheck{I}_{\alpha}(\cN) := \max_{\rho_A} \widecheck{I}_{\alpha}(A' ; B)_{\tau} \,.
\end{align}

\section{Error exponent for classical-quantum channel coding}\label{sec:c-q}

We consider a classical-quantum channel  characterised by an ensemble $\cW = \{ \rho_x:\, x \in \cX \}$ of quantum states on some fixed Hilbert space $\cH$. Here, $\cX$ is the input alphabet. We are interested in sending a (uniformly distributed) message $m \in \cM = \{1, 2, \ldots, M \}$ through this channel. A \emph{channel code} $\cC$ of size $|\cC| = M$ is comprised of an encoder function, $e: \cM \to \cX$ and positive operator-valued measurement $\{ \Pi_m :\, m \in \cM\}$, the decoder. The \emph{average probability of error} for such a code is given by
\begin{align}
    p_{\textnormal{err}}(\cC, \cW) = 1 - \frac{1}{M} \sum_{m}  \tr \big[ \rho_{e(m)} \Pi_m \big] = \frac{1}{M} \sum_{m \neq m'} \tr \big[ \rho_{e(m)} \Pi_{m'} \big] \,.
\end{align}

\subsection{One-shot bound}

The following theorem is a one-shot bound on the average error probability based on random coding.

\begin{theorem}
    \label{thm:one-shot-error}
    Let $M \in \mathbb{N}$, $P_X$ be a distribution on $\cX$, and $\alpha \in \big[\frac12, 1\big)$. There exists a code $\cC$ over $\cW$ with $|\cC| = M$ satisfying
    \begin{align}
        p_{\textnormal{err}}(\cC, \cW) & \leq (M-1)^{\frac{1-\alpha}{\alpha}} \max_{\sigma \geq 0 \atop \tr[\sigma] = 1} \left( \E_{x}\, \widecheck{Q}_{\alpha}( \rho_x \| \sigma) \right) ^{\frac{1}{\alpha}} \label{eq:cq-bound-1}\\
        &\leq \exp \left( \frac{1-\alpha}{\alpha} \big( \log M - \widecheck{I}_{\alpha}(X ; B)_{\rho} \big) \right),\label{eq:cq-bound-2}
    \end{align}
    where the expectation is with respect to distribution $P_X$ and $\rho_{XB} = \sum_x P_X(x) \proj{x}_X \otimes \rho_B^x$ with $\rho_B^x = \rho_x$. In particular, there exists a code satisfying
    \begin{align}
        p_{\textnormal{err}}(\cC, \cW) \leq \exp \left( \frac{1-\alpha}{\alpha} \big( \log M - \widecheck{\chi}_{\alpha}(\cW) \big) \right) .\label{eq:bound-3}
    \end{align}
\end{theorem}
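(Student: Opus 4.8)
The plan is to use random coding together with the new decoder built from the variational formula for $\widecheck{Q}_\alpha$. I would first fix the optimal measurement-achieving operator. By the variational expression~\eqref{eq:variational}, for the state $\sigma$ attaining the radius there exists $Y > 0$ such that $\widecheck{Q}_\alpha(\rho_x\|\sigma)$ is controlled by terms involving $\tr[\rho_x Y^{(\alpha-1)/\alpha}]$ and $\tr[\sigma Y]$. The guiding idea is that the ``alternative matrix quotient'' announced in the abstract lets us define, for a random codebook $\{e(m)\}$ drawn i.i.d.\ from $P_X$, decoding operators of pretty-good-measurement type whose error terms factor multiplicatively rather than producing the usual square-root cross terms. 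Concretely I would let $S = \sum_m \rho_{e(m)}$ (or a suitable $Y$-weighted analogue) and define $\Pi_m$ as a quotient of $\rho_{e(m)}$ by a power of $S$ chosen so that the operator Hölder-type bound below goes through.

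Next I would bound the average error. Starting from $p_{\textnormal{err}} = \frac1M\sum_{m\neq m'}\tr[\rho_{e(m)}\Pi_{m'}]$, I would take the expectation over the random codebook. The crucial estimate is an operator inequality of the form $1 - \Pi_m \leq (\text{const})\,\big(\sum_{m'\neq m}\rho_{e(m')}/\sigma\big)^{s}$ for an appropriate quotient and exponent $s = (1-\alpha)/\alpha \in (0,1]$, combined with operator concavity/convexity so that $\E$ can be pushed inside the power $(\cdot)^{s}$ via Jensen. This is exactly the place where the new decoder pays off: the exponent $s\le 1$ makes $t\mapsto t^{s}$ operator concave, so taking the expectation of $\sum_{m'\neq m}\rho_{e(m')}$ first yields $(M-1)\,\E_x\rho_x$ inside the power, producing the clean factor $(M-1)^{(1-\alpha)/\alpha}$ of~\eqref{eq:cq-bound-1}. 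Matching the remaining trace to $(\E_x\,\widecheck{Q}_\alpha(\rho_x\|\sigma))^{1/\alpha}$ is then a direct application of the variational identity~\eqref{eq:variational}, optimising the free operator $Y$.

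To pass from~\eqref{eq:cq-bound-1} to~\eqref{eq:cq-bound-2}, I would choose $\sigma$ in the maximisation, and recall that $\widecheck{Q}_\alpha = \exp((\alpha-1)\widecheck{D}_\alpha)$; since $\E_x\,\widecheck{Q}_\alpha(\rho_x\|\sigma)$ is precisely $\widecheck{Q}_\alpha(\rho_{XB}\|\rho_X\otimes\sigma_B)$ after unpacking the classical register, minimising over $\sigma_B$ gives $\widecheck{Q}_\alpha$ of the mutual-information form, and raising to $1/\alpha$ and taking logarithms reproduces the stated exponent $\frac{1-\alpha}{\alpha}(\log M - \widecheck{I}_\alpha(X;B)_\rho)$. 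Finally,~\eqref{eq:bound-3} follows by first optimising the input distribution $P_X$: taking the $P_X$ that attains $\widecheck{\chi}_\alpha(\cW)$ and invoking Lemma~\ref{lem:measured-radius} to identify $\max_{P_X}\widecheck{I}_\alpha(X;B)_\rho = \widecheck{\chi}_\alpha(\cW)$.

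The main obstacle I anticipate is establishing the operator inequality bounding $1-\Pi_m$ (and $\Pi_{m'}$ for $m'\neq m$) by the right power of the quotient in a way that is compatible with operator concavity of $t\mapsto t^{s}$. The subtlety is that $\rho_{e(m)}$ and $S$ generally do not commute, so the naive scalar manipulation does not carry over; I expect the new matrix-quotient definition of the decoder is designed precisely so that a Hölder- or Araki--Lieb--Thirring-type inequality applies cleanly, and verifying that this holds with only a multiplicative (not exponential) loss is the technical heart of the argument. A secondary concern is justifying the exchange of expectation and the $(\cdot)^{s}$ power via operator Jensen, which requires $s\in(0,1]$ and hence restricts $\alpha\ge\frac12$ — consistent with the theorem's hypothesis.
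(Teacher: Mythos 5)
Your overall architecture matches the paper's proof: random coding, a pretty-good-measurement-type decoder defined through a matrix quotient with free operators $Y_x$, raising an operator bounded by $1$ to the power $r=\frac{1-\alpha}{\alpha}\in(0,1]$, and then Young's inequality, Sion's minimax theorem and the variational formula~\eqref{eq:variational} to recover $\E_x\,\widecheck{Q}_\alpha(\rho_x\|\sigma)$; the passage to~\eqref{eq:cq-bound-2} and~\eqref{eq:bound-3} via the cq structure and Lemma~\ref{lem:measured-radius} is also as in the paper. However, you have left the technical heart of the argument as an acknowledged conjecture rather than a proof. The step you flag as ``the main obstacle'' --- an operator inequality letting you replace the full denominator $\sum_{m'}Y_{x_{m'}}$ by the single term $Y_{x_m}$ --- is precisely what the paper supplies, and it is \emph{not} obtained from a H\"older or Araki--Lieb--Thirring argument. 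Instead, the quotient is defined as $\altfrac{A}{B}:=\int_0^\infty(\lambda+B)^{-1}A(\lambda+B)^{-1}\,\d\lambda$ (the Fr\'echet derivative of the logarithm), and the key fact is Lemma~\ref{lem:quotient}, $\altfrac{A}{(A+B)}\le \altfrac{A}{B}$, proved by observing $(\lambda+A+B)A^{-1}(\lambda+A+B)\ge(\lambda+B)A^{-1}(\lambda+B)$ and invoking operator anti-monotonicity of the inverse before integrating over $\lambda$. For the standard quotient $B^{-1/2}AB^{-1/2}$ this inequality fails, so without specifying the quotient and proving this lemma your argument does not go through.

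Two smaller inaccuracies. First, the operator you need to bound is $\sum_{m'\ne m}\Pi_{m'}$, and after Lemma~\ref{lem:quotient} it is dominated by the quotient of $\sum_{m'\ne m}Y_{x_{m'}}$ by $Y_{x_m}$ --- the codeword's \emph{own} operator --- not by $\sigma$ as in your sketched inequality; the state $\sigma$ only enters afterwards through $\|\E_x Y_x\|=\max_\sigma\tr[\sigma\,\E_x Y_x]$ and the minimax exchange. Second, the expectation is pushed inside the power not by operator Jensen for $t\mapsto t^r$ but by the elementary fact that $0\le X\le 1$ implies $X\le X^r$, followed by operator monotonicity of $t\mapsto t^r$ applied to the Lemma~\ref{lem:quotient} bound and to $\altfrac{\E_{x'}Y_{x'}}{Y_{x}}\le\|\E_{x'}Y_{x'}\|\cdot Y_{x}^{-1}$; your route via operator concavity could be made to work, but as written it is not what carries the burden of the proof.
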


The expression has no fudge terms and, in case where all $\rho_x$'s commute, we can replace the measured quantities with the usual R\'enyi quantities. Also note that the measurement based on which $\widecheck{Q}_{\alpha}( \rho_x \| \sigma)$ is defined, depends on the input $x$. 
Thus, we cannot simply think of this measurement as the classical error exponent for a channel where a fixed measurement is appended as a post-processing.

The rest of this subsection is devoted to the proof of this theorem. Later, in Subsection~\ref{subsec:cq-channel-coding-asymptotic} we compute the asymptotic bound derived from this theorem.
The proof of Theorem~\ref{thm:one-shot-error} is based on a new class of decoders that is more suitable for the analysis of error at least for the error exponent problem. To motivate the definition of these decoders we first focus on the channel coding problem in the fully classical setting, and present a tight analysis of the error. Then, we explain how our new decoder can be used to generalise this analysis to the quantum case.

\subsubsection{Classical channels}

Suppose that the output of the channel $\cW$ is classical meaning that all states $\rho_x$, $x\in \mathcal X$, commute. As usual we employ a random encoding, and given an input distribution $P_X$, choose codewords $x_1, \dots, x_M$ independently at random according to $P_X$. For the decoder we fix some parameter $\alpha\in [1/2, 1)$ and choose 
\begin{align}\label{eq:def-decoder-classic}
\Pi_m = \frac{\rho_{x_m}^\alpha}{\sum_{m'}\rho_{x_{m'}}^\alpha }.
\end{align}
We note that $\Pi_m$ is a positive semidefinite matrix and $\sum_m \Pi_m=1$, so these operators form a valid measurement. We emphasise that here $\rho_x$'s commute and there is no ambiguity in the definition of $\Pi_m$. Such a decoder has already been studied in the classical setting in~\cite{liu2017alpha} for the error exponent problem. See also~\cite{scarlett2015likelihood} for similar results. Nevertheless, here we deviate from the methods of~\cite{liu2017alpha, scarlett2015likelihood} in order to make our analysis more amenable to a quantum generalisation.

Writing down the expectation of the average probably of error, we compute
\begin{align}
    \E_{x_1, \dots, x_M}[p_{\textnormal{err}}] & = \frac{1}{M}\E_{x_1, \dots, x_M} \left[\sum_{m=1}^M \tr\left[ 
 \rho_{x_m} \left(\sum_{m'\neq m}\Pi_{m'}\right) \right]\right]\\
  & = \E_{x_1, \dots, x_M} \tr\left[ 
 \rho_{x_1} \left(\sum_{m'\neq 1}\Pi_{m'}\right) \right]\\
   & = \E_{x_1} \tr\left[ 
 \rho_{x_1} \left(\E_{x_2, \dots, x_M}\sum_{m'\neq 1}\Pi_{m'}\right) \right].
\end{align}
Here, in the second line we use the symmetry of the problem to consider only the error associated to message $m=1$, and in the third line we use the linearity of expectation and the fact that the codewords are chosen independently. Next, we note that by definition $\sum_{m'\neq 1}\Pi_{m'}\leq 
1$ and taking the expectation we still have $\E_{x_2, \dots, x_M}\sum_{m'\neq 1}\Pi_{m'}\leq 1$. Then, letting $r=\frac{1-\alpha}{\alpha}$, since $\alpha\in [1/2, 1)$, we have $r\in(0, 1]$ and
\begin{align}\label{eq:raise-power-r-classic}
\E_{x_2, \dots, x_M}\sum_{m'\neq 1}\Pi_{m'}\leq \left(\E_{x_2, \dots, x_M}\sum_{m'\neq 1}\Pi_{m'}\right)^r.
\end{align}
This yields 
\begin{align}
    \E_{x_1, \dots, x_M}[p_{\textnormal{err}}]
   & \leq \E_{x_1} \tr\left[ 
 \rho_{x_1} \left(\E_{x_2, \dots, x_M}\sum_{m'\neq 1}\Pi_{m'}\right)^r \right].
\end{align}
Using the definition of the decoder, and the fact that $\rho_x$'s commute, we have
\begin{align}\label{eq:denom-eliminate-classic}
    \Pi_m = \frac{\rho_{x_m}^\alpha}{\sum_{m'\neq m}\rho_{x'}^\alpha }\leq \frac{\rho_{x_m}^\alpha}{\rho_{x_1}^\alpha}.
\end{align}
Hence, the monotonicity of $t\mapsto t^r$ implies 
\begin{align}
    \E_{x_1, \dots, x_M}[p_{\textnormal{err}}]
   & \leq \E_{x_1} \tr\left[ 
 \rho_{x_1} \left(\E_{x_2, \dots, x_M}\sum_{m'\neq 1}\frac{\rho_{x_m}^\alpha}{\rho_{x_1}^\alpha}\right)^r \right]\\
 & = \E_{x_1} \tr\left[ 
 \rho_{x_1} \left((M-1) \frac{\E_{x_2} \rho_{x_2}^\alpha}{\rho_{x_1}^\alpha}\right)^r \right]\\
 & = (M-1)^r \E_{x_1} \tr\left[ 
 \rho_{x_1}^{1-r\alpha} \left( \E_{x_2}\rho_{x_2}^\alpha\right)^r \right]\label{eq:comp-power-classic}\\
 & = (M-1)^r  \tr\left[ \left(
 \E_{x_1} \rho_{x_1}^{\alpha} \right)\left( \E_{x_2}\rho_{x_2}^\alpha\right)^r \right]\\
 & = (M-1)^r  \tr\left[ \left(
 \E_{x} \rho_{x}^{\alpha} \right)^{1+r} \right]. \label{eq:fin-bound-classic}
\end{align}
Here, in the third line we use the fact that $\rho_x$'s commute and in the fourth line we use the definition of $r$ to conclude  that $1-r\alpha=\alpha$. It is readily verified that~\eqref{eq:fin-bound-classic} is equivalent to the bounds~\eqref{eq:cq-bound-1} of Theorem~\ref{thm:one-shot-error} for classical channels by performing the optimisation over $\sigma$ in the latter bound.

\subsubsection{Classical-quantum channels}

To generalise the above analysis to the quantum case, we need to verify the validity of the four key steps in~\eqref{eq:def-decoder-classic},~\eqref{eq:raise-power-r-classic},~\eqref{eq:denom-eliminate-classic} and~\eqref{eq:comp-power-classic} when $\rho_x$'s do not commute.
First, a standard way of defining the measurement operators in~\eqref{eq:def-decoder-classic} is via the \emph{matrix division:}
\begin{align}
    \altfrac{A}{B} = B^{-\frac 12}AB^{-\frac 12},
\end{align}
which gives the pretty-good measurement operators
\begin{align}
\Pi_m'=\left(\sum_{m'}\rho_{x_{m'}}^\alpha\right)^{-\frac 12} \rho_{x_{m}}^\alpha \left(\sum_{m'}\rho_{x_{m'}}^\alpha\right)^{-\frac 12}.
\end{align}
Inequality~\eqref{eq:raise-power-r-classic} holds for these and in fact for any measurement operators. Nevertheless, it is not hard to come up with examples of quantum states for which the above matrix division violates~\eqref{eq:denom-eliminate-classic}.  

Our idea is to use a non-standard definition of matrix quotient that satisfies~\eqref{eq:denom-eliminate-classic}. To this end, for any matrix $A$ and a positive definite matrix $B > 0$ we define
\begin{align}\label{eq:int-matrix-quotient}
    \frac{A}{B} := \int_0^{\infty} \frac{1}{\lambda+B} A \frac{1}{\lambda+B} \, \d \lambda.
\end{align}
We can also understand the above quotient as the Fr\'echet derivative $D\ln(B)[A]$ of the logarithm. We note that $\int_{0}^{\infty} \d\lambda\, (\lambda+b)^{-2} = b^{-1}$, so the above integral evaluates to $A B^{-1}$ when $A, B$ commute.
The following lemma shows that this matrix division satisfies~\eqref{eq:denom-eliminate-classic}. 

\begin{lemma} \label{lem:quotient}
    For any matrices $A \geq 0$ and $B > 0$, we have 
    \begin{align}
        \frac{A}{A+B} \leq \frac{A}{B} \,.
    \end{align}
\end{lemma}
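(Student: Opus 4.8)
The plan is to establish the inequality by comparing the two integrands in the definition~\eqref{eq:int-matrix-quotient} pointwise in the integration variable. Writing $C = \lambda + B > 0$ and noting that $\lambda + A + B = C + A$, it suffices to prove, for every fixed $\lambda \geq 0$, the operator inequality
\begin{align}
    (C+A)^{-1} A (C+A)^{-1} \leq C^{-1} A C^{-1},
\end{align}
since integrating both sides over $\lambda \in [0,\infty)$ preserves the L\"owner order. (Each integrand is of the form $M A M$ with $M$ self-adjoint and $A \geq 0$, hence positive, which also guarantees the integrals make sense.) This reduces the integral statement to a single algebraic matrix inequality.

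Next I would eliminate $A$ in favour of $C$ and $D := C + A$. Since $A = D - C$ with $C, D > 0$, expanding
\begin{align}
    (C+A)^{-1} A (C+A)^{-1} &= D^{-1} - D^{-1} C D^{-1}, \\
    C^{-1} A C^{-1} &= C^{-1} D C^{-1} - C^{-1},
\end{align}
turns the target into the symmetric-looking statement
\begin{align}
    C^{-1} + D^{-1} \leq C^{-1} D C^{-1} + D^{-1} C D^{-1}. \label{eq:plan-reduced}
\end{align}

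The decisive step is then to express everything through a single positive operator. Setting $X := C^{-1/2} D C^{-1/2} > 0$ and conjugating~\eqref{eq:plan-reduced} by $C^{1/2}$ (which preserves the order since $C^{1/2}$ is self-adjoint and invertible), and using the identities $C^{1/2} D^{-1} C^{1/2} = X^{-1}$ and $C^{1/2} D^{-1} C D^{-1} C^{1/2} = X^{-2}$, the inequality collapses to
\begin{align}
    1 + X^{-1} \leq X + X^{-2},
\end{align}
that is, $g(X) \geq 0$ for the scalar function $g(x) = x - x^{-1} + x^{-2} - 1$. Since $x^2 g(x) = x^3 - x^2 - x + 1 = (x-1)^2 (x+1) \geq 0$ for all $x > 0$, the function $g$ is nonnegative on the spectrum of $X$, so $g(X) \geq 0$ by the functional calculus, which closes the argument.

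I expect the main obstacle to be precisely the pointwise operator inequality $(C+A)^{-1}A(C+A)^{-1} \leq C^{-1}AC^{-1}$: unlike in the commuting (scalar) case it is not evident by inspection, and the idea of conjugating by $C^{1/2}$ so as to write both sides as functions of the single operator $X = C^{-1/2}(C+A)C^{-1/2}$ is what makes it tractable, reducing the whole matter to the elementary factorisation $(x-1)^2(x+1) \geq 0$. A secondary technical point is justifying the passage from the pointwise bound to the integrated one, which is immediate because the L\"owner order is closed under integration of a positive-operator-valued integrand.
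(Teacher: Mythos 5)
Your proof is correct, and while it shares with the paper the (essentially forced) first step of reducing the claim to the pointwise-in-$\lambda$ inequality $(\lambda+A+B)^{-1}A(\lambda+A+B)^{-1}\leq(\lambda+B)^{-1}A(\lambda+B)^{-1}$ and then integrating, your proof of that pointwise inequality is genuinely different. The paper sandwiches $A^{-1}$ instead: it expands $(\lambda+A+B)A^{-1}(\lambda+A+B)=(\lambda+B)A^{-1}(\lambda+B)+A+2(\lambda+B)\geq(\lambda+B)A^{-1}(\lambda+B)$ and then applies the operator anti-monotonicity of $t\mapsto t^{-1}$, using $\bigl(XA^{-1}X\bigr)^{-1}=X^{-1}AX^{-1}$. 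That argument is two lines, but as written it requires $A>0$ (it manipulates $A^{-1}$), so strictly speaking the case of singular $A\geq0$ needs an extra continuity or support-restriction step that the paper leaves implicit. Your route --- eliminating $A$ via $A=D-C$, conjugating by $C^{1/2}$ to express both sides as functions of the single operator $X=C^{-1/2}DC^{-1/2}$, and invoking the scalar factorisation $x^3-x^2-x+1=(x-1)^2(x+1)\geq0$ --- is longer but never inverts $A$, so it handles $A\geq0$ directly; it also isolates the "real reason" the inequality holds as an elementary one-variable polynomial inequality. Both arguments are sound; the paper's is more economical, yours is more self-contained on the boundary case.
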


\begin{proof}
    We first observe that
    \begin{align}
        (\lambda + A + B) A^{-1} (\lambda + A + B) 
        &= (\lambda + B) A^{-1} (\lambda + B) + A + 2 (\lambda + B) \\
        &\geq (\lambda + B) A^{-1} (\lambda + B).
    \end{align}
    Hence, by the operator anti-monotonicity of $t\mapsto t^{-1}$, we get
    \begin{align}
        \frac{1}{\lambda + B + A} A \frac{1}{\lambda + A + B} \leq \frac{1}{\lambda + B} A \frac{1}{\lambda + B}.
    \end{align}
    Integrating this inequality, we obtain the desired result.
\end{proof}

Using this lemma, the above classical argument goes through up to the point where we apply equation~\eqref{eq:fin-bound-classic}. This equation, however, is far from true in the non-commutative case, so we need an extra ingredient to complete the analysis. 

In the non-commutative case, we may generalise the measurement by first fixing some operators $Y_x> 0$ and then defining
\begin{align}\label{eq:def-Pi-Y}
    \Pi_m= \frac{Y_{x_m}}{\sum_{m'} Y_{x_{m'}}}.
\end{align}
Of course, letting $Y_x$ to be $\rho_x^\alpha$, we recover the previous measurement, yet we leave $Y_x$'s arbitrary for now.\footnote{Letting $Y_{x} = \rho_{x}^\alpha=\rho_x$ for $\alpha=1$, we obtain the \emph{universal recovery map} of~\cite{junge2018universal}. To understand this connection see~\cite[Lemma 3.4]{sutter16}. } Then, repeating the above argument for the measurement operators~\eqref{eq:def-Pi-Y} and using Lemma~\ref{lem:quotient} we arrive at
\begin{align}
    \E_{x_1, \dots, x_M}[p_{\textnormal{err}}]
 \leq (M-1)^r \E_{x_1} \tr\left[ 
 \rho_{x_1} \left( \frac{\E_{x_2} Y_{x_2}}{Y_{x_1}}\right)^r \right].
\end{align}
Now we note that $\frac{\E_{x_2} Y_{x_2}}{Y_{x_1}}\leq \|\E_{x_2} Y_{x_2}\|\cdot \frac{1}{Y_{x_1}}$.
Then, using the operator monotonicity of $t\mapsto t^r$ and optimising over $Y_x$'s,  we conclude that
\begin{align}
    \E_{x_1, \dots, x_M}[p_{\textnormal{err}}]
   & \leq (M-1)^r\inf_{Y_x>0} \E_{x_1} \tr\left[ 
 \rho_{x_1} Y_{x_1}^{-r} \right]\cdot \|\E_{x_2} Y_{x_2}\|^r\\
 &= (M-1)^r\left(\inf_{Y_x>0} \left(\E_{x_1} \tr\left[ 
 \rho_{x_1} Y_{x_1}^{-r} \right]\right)^\alpha\cdot \|\E_{x_2} Y_{x_2}\|^{1-\alpha}\right)^{\frac{1}{\alpha}}.
\end{align}
Next, applying the weighted arithmetic-geometric mean inequality (Young's inequality) we arrive at
\begin{align}
   \E_{x_1, \dots, x_M}[p_{\textnormal{err}}]
   & \leq (M-1)^r\left(\inf_{Y_x>0} \alpha\E_{x} \tr\left[ 
 \rho_{x_1} Y_{x}^{-r} \right]+(1-\alpha) \|\E_{x} Y_{x}\|\right)^{\frac{1}{\alpha}}\\
 & \leq (M-1)^r\left(\inf_{Y_x>0} \max_{\sigma_B} \E_x\big[ \alpha \tr\left[ 
 \rho_{x} Y_{x}^{-r} \right]+(1-\alpha) \tr[\sigma Y_{x}]\big]\right)^{\frac{1}{\alpha}}\\
 & = (M-1)^r\left( \max_{\sigma_B} \inf_{Y_x>0} \E_x\big[ \alpha \tr\left[ 
 \rho_{x} Y_{x}^{-r} \right]+(1-\alpha) \tr[\sigma Y_{x}]\big]\right)^{\frac{1}{\alpha}}\label{eq:cq-bound-Y}\\
 & = (M-1)^r\left( \max_{\sigma_B}  \E_x  \widecheck{Q}_{\alpha}(\rho_x \| \sigma) \right)^{\frac{1}{\alpha}}\,. \label{eq:cq-bound-Y-2}
\end{align}
Here, in~\eqref{eq:cq-bound-Y} we use the operator convexity of $t\mapsto t^{-r}$ to apply Sion's minimax theorem~\cite{sion58}. Also,~\eqref{eq:cq-bound-Y-2} follows from~\eqref{eq:variational}. This gives~\eqref{eq:cq-bound-1}. To prove~\eqref{eq:cq-bound-2} we need to verify that  
\begin{align}
        \mathbb E_x\widecheck{Q}_{\alpha}(\rho_x \| \sigma)  =  \widecheck{Q}_{\alpha}(\rho_{XB} \| P_X\otimes \sigma). 
         \label{eq:variational-cq-2}
\end{align}
This holds since in the definition of $\widecheck{Q}_{\alpha}(\rho_{XB} \| P_X\otimes \sigma)$ we can with no loss of generality restrict to measurements that respect the cq structure. The point is that given a measurement  $\{\Lambda_j:\, j\}$, we have $\tr(\rho_{XB}\Lambda_j) = \E_x \tr(\rho_x \Lambda_{j}^x)$ where $\Lambda_j^x = \bra{x} \Lambda_j \ket x$ and $\{\Lambda_j^x:\, j\}$ for any $x$ forms a measurement.


\subsubsection{Further properties of the new decoder}

In Appendix~\ref{app:Hayashi's-bound-alt} we show that the new decoder can also be used to derive Hayashi's bound, Eq.~\eqref{eq:hayashi}. The main ingredient is the following lemma, which establishes Cheng's tight one-shot bound on the error probability~\cite[Lemma 17]{Cheng22} for the new matrix quotient. 

\begin{lemma}\label{lem:int-measurement-error-bound}
For positive semidefinite operators $A, B$ acting on a Hilbert space $\cH$ we have 
\begin{align}\label{eq:int-measurement-error-bound}
\tr\left[A \frac{B}{A+B}\right] \leq \frac12 \big( \tr[A+ B] - \|A-B\|_1 \big).
\end{align}
\end{lemma}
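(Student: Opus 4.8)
The plan is to reduce the claimed bound to a single self-adjoint trace inequality and then prove that inequality by diagonalising $A+B$ and applying a scalar mean inequality entrywise. Throughout I write $S=A+B$ and $C=A-B$, and I use that the quotient $X\mapsto\frac{X}{S}$ defined in~\eqref{eq:int-matrix-quotient} is linear and positive (it preserves the L\"owner order), that $\frac{S}{S}$ equals the projection $P$ onto the support of $S$, and that $A,B\leq S$ are supported within $\mathrm{supp}(S)$, so I may work where $S>0$.

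First I would record a few identities. By cyclicity of the trace and~\eqref{eq:int-matrix-quotient} one checks $\tr[A\frac{B}{S}]=\tr[B\frac{A}{S}]=:T$, and since $\frac{A}{S}+\frac{B}{S}=\frac{S}{S}=P$ it follows that $\tr[A\frac{A}{S}]=\tr A-T$ and $\tr[B\frac{B}{S}]=\tr B-T$. Substituting $A=\tfrac12(S+C)$ and $B=\tfrac12(S-C)$ into $A\frac{A}{S}+B\frac{B}{S}$, the cross terms cancel and one obtains $\tr[A\frac{A}{S}]+\tr[B\frac{B}{S}]=\tfrac12\big(\tr S+\tr[C\frac{C}{S}]\big)$. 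Comparing the two expressions for this sum yields the clean identity $T=\tfrac14\big(\tr S-\tr[C\frac{C}{S}]\big)$, whence~\eqref{eq:int-measurement-error-bound} is seen to be equivalent to
\[ \tr\Big[C\,\tfrac{C}{S}\Big]+\tr[S]\ \geq\ 2\,\|C\|_1. \]

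For this core inequality I would diagonalise $S=\sum_i s_i\proj{i}$. Using $\int_0^\infty(\lambda+s_i)^{-1}(\lambda+s_j)^{-1}\d\lambda=\frac{\log s_i-\log s_j}{s_i-s_j}=:L(s_i,s_j)$, the quotient linearises into a Schur multiplier, so $\tr[C\frac{C}{S}]=\sum_{ij}L(s_i,s_j)\,|C_{ij}|^2$, where $1/L(s_i,s_j)$ is exactly the logarithmic mean of $s_i$ and $s_j$. I then introduce an auxiliary self-adjoint $Y$ and apply the scalar inequality $L\,|C_{ij}|^2+L^{-1}|Y_{ij}|^2\geq 2|C_{ij}||Y_{ij}|$ entrywise; summing and using $|\tr[CY]|\leq\sum_{ij}|C_{ij}||Y_{ij}|$ gives $\tr[C\frac{C}{S}]+\sum_{ij}L(s_i,s_j)^{-1}|Y_{ij}|^2\geq 2|\tr[CY]|$. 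Because the logarithmic mean is dominated by the arithmetic mean, $L(s_i,s_j)^{-1}\leq\tfrac12(s_i+s_j)$, and $\sum_{ij}\tfrac12(s_i+s_j)|Y_{ij}|^2=\tr[SY^2]$, so $\tr[C\frac{C}{S}]+\tr[SY^2]\geq 2|\tr[CY]|$.

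Finally I would set $Y=\mathrm{sgn}(C)$, the self-adjoint sign of $C$, for which $\tr[CY]=\|C\|_1$ and $Y^2$ is the support projection of $C$, so $\tr[SY^2]\leq\tr S$; substituting delivers the core inequality and hence the lemma. The main obstacle is the noncommutativity of $C$ and $S$ in $\tr[C\frac{C}{S}]$: the entire argument hinges on passing to the eigenbasis of $S$, where the quotient becomes a Schur multiplier with reciprocal-logarithmic-mean symbol, and on the sharp comparison logarithmic mean $\leq$ arithmetic mean, which is precisely what turns the weight $L^{-1}$ from the AM-GM step into the bound $\tr[SY^2]\leq\tr S$. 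As a sanity check, the scalar case collapses to $(|c|-s)^2\geq 0$, confirming both correctness and tightness (equality when $\min(a,b)=0$ at each eigenvalue).
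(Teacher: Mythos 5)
Your proof is correct, but it takes a genuinely different route from the paper's. The paper follows Cheng's argument: it treats $Q_2^{\iota}(A\|B)=\tr[A\frac{A}{B}]$ as a divergence, invokes its joint convexity (from Hiai--Petz) to get a data-processing inequality under the two-outcome measurement channel built from the optimal trace-distance projection $\Pi$ with $\|A-B\|_1=\tr[(A-B)(2\Pi-1)]$, and then finishes with scalar algebra on the resulting classical quantities. You instead first establish the exact identity $\tr[A\frac{B}{S}]=\frac14\big(\tr S-\tr[C\frac{C}{S}]\big)$ with $S=A+B$, $C=A-B$ (using linearity of the quotient and $\frac{S}{S}=1$ on the support), reducing the lemma to $\tr[C\frac{C}{S}]+\tr S\geq 2\|C\|_1$, and then prove this core inequality by hand: in the eigenbasis of $S$ the quotient is a Schur multiplier with symbol $L(s_i,s_j)=\frac{\log s_i-\log s_j}{s_i-s_j}$, and entrywise AM--GM against a test operator $Y$ plus the logarithmic-mean $\leq$ arithmetic-mean bound give $\tr[C\frac{C}{S}]+\tr[SY^2]\geq 2|\tr[CY]|$, which you evaluate at $Y=\mathrm{sgn}(C)$. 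All the steps check out (including the cancellation of cross terms and $\tr[SY^2]\leq\tr S$). In effect your AM--GM step with general $Y$ is a hands-on, variational derivation of precisely the instance of data processing the paper imports as a black box, and both proofs ultimately pivot on the same projection $\mathrm{sgn}(A-B)$. What the paper's route buys is brevity and a direct parallel with the known proof for the standard matrix quotient; what yours buys is a self-contained elementary argument that avoids citing joint convexity of $Q_2^{\iota}$, at the cost of a basis-dependent computation. One cosmetic point: the quotient $\frac{B}{A+B}$ is only defined in the paper for a positive definite denominator, so your explicit restriction to $\mathrm{supp}(A+B)$ is a reduction worth keeping (the paper's own proof leaves this implicit).
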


Cheng's bound circumvents the famous Hayashi-Nagaoka operator inequality~\cite[Lemma 2]{hayashi03} and can be used to derive most known asymptotic achievability bounds on the error exponent as well as second-order or moderate-deviation expansions~\cite{Cheng22}. As a consequence, we establish that our decoder performs equally well in all these regimes. The proof follows~\cite{Cheng22} closely, and we present it in Appendix~\ref{app:Hayashi's-bound-alt} for completeness.

\subsection{Asymptotic bound}\label{subsec:cq-channel-coding-asymptotic}

Our second main result is an expression for the asymptotic error exponent. The error exponent of the classical-quantum channel coding problem is formally defined, for each rate $R > 0$, as
\begin{align}
    E(R, \cW) := \sup_{ \{\cC_n:\, n\} } \left\{ \liminf_{n\to\infty} -\frac{\log p_{\textnormal{err}}(\cC_n, \cW^{\otimes n})}{n}  :\,\, \liminf_{n\to\infty} \frac{\log |\cC_n|}{n} \geq R \right\} ,
\end{align}
where the optimisation is over sequences of codes $\{\cC_n:\, n\geq 1\}$ and $|\cC_n|$ is the number of codewords in the code.

\begin{theorem}\label{thm:asymptotic}
    Let $R > 0$. Then,
    \begin{align}
       & E(R, \cW) \geq \sup_{\alpha \in [\frac12, 1)} \left\{  \frac{1-\alpha}{\alpha} \big( \widetilde{\chi}_{\alpha}(\cW) - R \big) \right\} \,. 
\end{align}
\end{theorem}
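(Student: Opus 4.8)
The plan is to convert the one-shot bound of Theorem~\ref{thm:one-shot-error} into an asymptotic statement by applying it to the $n$-fold product channel $\cW^{\otimes n}$ and then taking the limit $n \to \infty$. Concretely, for a fixed $\alpha \in [\frac12, 1)$ and a target rate $R$, I would set $M = M_n := \lceil \exp(nR) \rceil$ so that $\frac1n \log |\cC_n| \geq R$, and invoke the bound~\eqref{eq:bound-3} with $\cW$ replaced by $\cW^{\otimes n}$. This guarantees, for each $n$, the existence of a code $\cC_n$ with
\begin{align}
    p_{\textnormal{err}}(\cC_n, \cW^{\otimes n}) \leq \exp\left( \frac{1-\alpha}{\alpha}\big( \log M_n - \widecheck{\chi}_\alpha(\cW^{\otimes n}) \big) \right).
\end{align}
Taking $-\frac1n \log$ of both sides and using $\frac1n \log M_n \to R$, the achieved exponent is at least $\frac{1-\alpha}{\alpha}\big( \lim_n \frac1n \widecheck{\chi}_\alpha(\cW^{\otimes n}) - R\big)$ in the $\liminf$.

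The crux of the argument is therefore the single-letterisation of the measured divergence radius, i.e.\ showing
\begin{align}
    \lim_{n \to \infty} \frac1n\, \widecheck{\chi}_\alpha(\cW^{\otimes n}) = \widetilde{\chi}_\alpha(\cW).
\end{align}
Here I would combine Lemma~\ref{lem:measured-radius}, which identifies $\widecheck{\chi}_\alpha(\cW^{\otimes n}) = \widecheck{I}_\alpha(\cW^{\otimes n})$, with the asymptotic equivalence of measured and sandwiched divergences in~\eqref{eq:measured-to-sandwich}, together with~\eqref{eq:equivalent} relating $\widetilde{\chi}_\alpha$ and $\widetilde{I}_\alpha$. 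The natural route is to fix a near-optimal single-letter $\sigma$ for $\widetilde{\chi}_\alpha(\cW)$, use $\sigma^{\otimes n}$ as a (suboptimal but valid) choice in the definition of $\widecheck{\chi}_\alpha(\cW^{\otimes n})$, and apply~\eqref{eq:measured-to-sandwich} to the states $\rho_x^{\otimes n}$ against $\sigma^{\otimes n}$. This gives the upper bound $\limsup_n \frac1n \widecheck{\chi}_\alpha(\cW^{\otimes n}) \leq \widetilde{\chi}_\alpha(\cW)$, since $\max_x \frac1n \widecheck{D}_\alpha(\rho_x^{\otimes n} \| \sigma^{\otimes n}) \to \max_x \widetilde{D}_\alpha(\rho_x \| \sigma)$. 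Since this upper bound is all I need to lower-bound the error exponent, I can avoid proving the matching lower bound on the radius, though it follows from the data-processing chain $\widecheck{D}_\alpha \leq \widetilde{D}_\alpha$ and superadditivity.

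The step I expect to demand the most care is the interchange of the maximisation over $x$ with the per-$x$ limit in~\eqref{eq:measured-to-sandwich}. The limit~\eqref{eq:measured-to-sandwich} holds for each fixed pair $(\rho_x, \sigma)$, but I need the convergence of the maximum over $x \in \cX$; since $\cX$ is finite this is unproblematic once each term converges, so the interchange is justified by taking the maximum of finitely many convergent sequences. The remaining bookkeeping is routine: the $\limsup$ in the radius turns into a $\liminf$ of the exponent after the sign flip, so I would take a supremum over $\alpha \in [\frac12, 1)$ at the very end, noting that the existence of the code for each $\alpha$ and each $n$ suffices to extract a sequence achieving the claimed exponent. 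I would verify that the ceiling in $M_n$ contributes only an $o(n)$ correction to $\log M_n$ and hence vanishes in the normalised limit.
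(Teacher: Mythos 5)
There is a genuine gap: you have the direction of the crucial single-letterisation inequality reversed. From \eqref{eq:bound-3} applied to $\cW^{\otimes n}$ you obtain
\begin{align}
    -\frac{1}{n}\log p_{\textnormal{err}}(\cC_n, \cW^{\otimes n}) \geq \frac{1-\alpha}{\alpha}\left( \frac{1}{n}\widecheck{\chi}_{\alpha}(\cW^{\otimes n}) - \frac{1}{n}\log M_n \right),
\end{align}
so to conclude the claimed exponent you need a \emph{lower} bound $\liminf_n \frac{1}{n}\widecheck{\chi}_{\alpha}(\cW^{\otimes n}) \geq \widetilde{\chi}_{\alpha}(\cW)$. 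The bound you actually sketch --- plugging the product state $\sigma^{\otimes n}$ into the minimisation and invoking $\widecheck{D}_{\alpha} \leq \widetilde{D}_{\alpha}$ --- yields $\limsup_n \frac{1}{n}\widecheck{\chi}_{\alpha}(\cW^{\otimes n}) \leq \widetilde{\chi}_{\alpha}(\cW)$, which is the easy direction and is useless here: it only makes the right-hand side of the error-exponent bound smaller. Your remark that the other direction ``follows from the data-processing chain $\widecheck{D}_{\alpha}\leq \widetilde{D}_{\alpha}$ and superadditivity'' does not work either, because that chain again pushes the measured quantity \emph{down} relative to the sandwiched one; what must be shown is that the smaller (measured) quantity asymptotically catches up with the larger (sandwiched) one, uniformly over the minimisation over arbitrary, possibly non-product states $\sigma_{B^n}$ on $B^n$.

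This is exactly the content of the paper's Lemma~\ref{lem:measured-limit}, whose proof is the technical heart of the theorem and which your proposal does not engage with: one first argues via a permutation twirl that the minimiser $\omega_{B^n}$ may be taken permutation invariant, so that $|\spec(\omega_{B^n})| = \poly(n)$, and then applies the pinching map $\cP_{\rho_A^{\otimes n}\otimes \omega_{B^n}}$ to reduce to the commutative case at the cost of an additive $\log\poly(n)$, which vanishes after normalisation. Equation \eqref{eq:measured-to-sandwich} alone does not suffice because it is stated for a \emph{fixed} pair of i.i.d.\ states, whereas here the second argument is optimised over all of $B^n$. (A secondary slip: the input alphabet of $\cW^{\otimes n}$ is $\cX^n$ with states $\rho_{x_1}\otimes\cdots\otimes\rho_{x_n}$, so the maximum is over $x^n$, not over diagonal words $\rho_x^{\otimes n}$; the paper sidesteps this by working with the i.i.d.\ mutual-information form \eqref{eq:cq-bound-2} rather than the radius form, and only converting to $\widetilde{\chi}_{\alpha}$ at the end via \eqref{eq:equivalent}.) The rest of your bookkeeping (choice of $M_n$, finiteness of $\cX$, supremum over $\alpha$ at the end) is fine, but without the hard direction of the single-letterisation the argument does not close.
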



We know that $\widetilde{\chi}_\alpha(\cW) \leq \widebar{\chi}_\alpha(\cW)$ since the sandwiched R\'enyi divergence is a lower bound to the Petz R\'enyi divergence for $\alpha\in [\frac 12, 1)$. Moreover, our bound is not comparable with~\eqref{eq:hayashi} for general states and neither bound is tight for pure states. However, our bound is tight (for rates above the critical rate) at least in the following situations:
\begin{itemize}
    \item[-] when $\{ \rho_x \}_x$ mutually commute, i.e., in the classical case;
    \item[-] for symmetric channels where $\rho_x = U_x \rho_0 U_x^{\dagger}$ for some unitary 1-design $\{ U_x :\, x\}$. (This recovers a result in~\cite{renes23}.)
\end{itemize}
In both cases this is because the optimizer $\sigma$ in~\eqref{eq:div-radius} can be chosen to commute with all $\rho_x$.

The proof of Theorem~\ref{thm:asymptotic} is based on the following lemma which shows that the sandwiched R\'enyi quantities emerge in an asymptotic limit of the measured variants, similar to~\eqref{eq:measured-to-sandwich}.

\begin{lemma} \label{lem:measured-limit} 
    Let $\rho_{AB}$ be a bipartite state and $\tau_A$ be arbitrary. For any $\alpha \geq \frac12$ we have
    \begin{align}
        \lim_{n \to \infty} \frac{1}{n} \widecheck{I}_{\alpha}\big( A^n ; B^n \big)_{\rho^{\otimes n}} = \widetilde{I}_{\alpha}(A; B)_{\rho} \,. 
    \end{align}
\end{lemma}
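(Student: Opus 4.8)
The plan is to prove the two inequalities
$\limsup_n \tfrac1n \widecheck{I}_{\alpha}(A^n;B^n)_{\rho^{\otimes n}} \le \widetilde{I}_{\alpha}(A;B)_{\rho}$
and
$\liminf_n \tfrac1n \widecheck{I}_{\alpha}(A^n;B^n)_{\rho^{\otimes n}} \ge \widetilde{I}_{\alpha}(A;B)_{\rho}$
separately, treating $\alpha\in[\tfrac12,1)$ in detail (the value $\alpha=1$ is trivial since both sides equal the usual mutual information, and $\alpha>1$ is analogous with the convexity directions reversed). The achievability inequality is the easy one: I let $\sigma_B$ attain the single-copy minimum in $\widetilde{I}_{\alpha}(A;B)=\widetilde{D}_{\alpha}(\rho_{AB}\|\rho_A\otimes\sigma_B)$, plug the product reference $\sigma_B^{\otimes n}$ into the definition of $\widecheck{I}_{\alpha}(A^n;B^n)$ to get $\tfrac1n\widecheck{I}_{\alpha}(A^n;B^n)\le \tfrac1n\widecheck{D}_{\alpha}\big(\rho_{AB}^{\otimes n}\big\|(\rho_A\otimes\sigma_B)^{\otimes n}\big)$, and send $n\to\infty$, invoking \eqref{eq:measured-to-sandwich} with $\rho\leftarrow\rho_{AB}$ and $\sigma\leftarrow\rho_A\otimes\sigma_B$ to obtain exactly $\widetilde{D}_{\alpha}(\rho_{AB}\|\rho_A\otimes\sigma_B)=\widetilde{I}_{\alpha}(A;B)$. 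This direction needs no additivity.

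The converse is the crux and the main obstacle. The naive attempt is to lower bound each $\widecheck{D}_{\alpha}(\rho^{\otimes n}\|\rho_A^{\otimes n}\otimes\sigma_{B^n})$ by the sandwiched value via the pinching inequality, but the minimizing reference $\sigma_{B^n}$ is an arbitrary state on $B^n$ and may have up to exponentially many distinct eigenvalues, so the pinching correction $\log v(\sigma_{B^n})$ (with $v(\cdot)$ the number of distinct eigenvalues) need not be $o(n)$. To control this I would first reduce the minimisation to permutation-invariant references: since $\rho_{AB}^{\otimes n}$ and $\rho_A^{\otimes n}$ are invariant under simultaneous permutation of the $n$ systems, and $\sigma\mapsto\widecheck{D}_{\alpha}(\rho^{\otimes n}\|\rho_A^{\otimes n}\otimes\sigma)$ is convex (because $\widecheck{Q}_{\alpha}(\rho\|\cdot)$ is concave by \eqref{eq:variational} and $\tfrac{1}{\alpha-1}\log(\cdot)$ is convex and decreasing), averaging any optimiser over the symmetric group yields a permutation-invariant optimiser. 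I would then use the representation-theoretic fact (Schur--Weyl duality) that there is a permutation-invariant ``universal'' state $\omega_{B^n}$ with only $\poly(n)$ distinct eigenvalues such that every permutation-invariant state obeys $\sigma_{B^n}\le \poly(n)\,\omega_{B^n}$.

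With this reduction the argument closes in three steps, uniformly over permutation-invariant $\sigma_{B^n}$. First, $\rho_A^{\otimes n}\otimes\sigma_{B^n}\le \poly(n)\,(\rho_A^{\otimes n}\otimes\omega_{B^n})$ together with the scaling $\widecheck{D}_{\alpha}(\rho\|c\sigma)=\widecheck{D}_{\alpha}(\rho\|\sigma)-\log c$ and the antitonicity of $\widecheck{D}_{\alpha}$ in its second argument gives $\widecheck{D}_{\alpha}(\rho^{\otimes n}\|\rho_A^{\otimes n}\otimes\sigma_{B^n})\ge \widecheck{D}_{\alpha}(\rho^{\otimes n}\|\rho_A^{\otimes n}\otimes\omega_{B^n})-\log\poly(n)$. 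Second, the reference $\rho_A^{\otimes n}\otimes\omega_{B^n}$ has only $\poly(n)$ distinct eigenvalues, so the pinching bound underlying \eqref{eq:measured-to-sandwich} (measure in the eigenbasis of the reference and apply the pinching inequality) yields $\widecheck{D}_{\alpha}(\rho^{\otimes n}\|\rho_A^{\otimes n}\otimes\omega_{B^n})\ge \widetilde{D}_{\alpha}(\rho^{\otimes n}\|\rho_A^{\otimes n}\otimes\omega_{B^n})-O(\log\poly(n))$. Third, bounding $\widetilde{D}_{\alpha}(\rho^{\otimes n}\|\rho_A^{\otimes n}\otimes\omega_{B^n})\ge \widetilde{I}_{\alpha}(A^n;B^n)$ and using additivity of the sandwiched R\'enyi mutual information, $\widetilde{I}_{\alpha}(A^n;B^n)_{\rho^{\otimes n}}=n\,\widetilde{I}_{\alpha}(A;B)_{\rho}$ for $\alpha\ge\tfrac12$, gives $\widecheck{D}_{\alpha}(\rho^{\otimes n}\|\rho_A^{\otimes n}\otimes\sigma_{B^n})\ge n\,\widetilde{I}_{\alpha}(A;B)-O(\log\poly(n))$. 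Taking the minimum over permutation-invariant $\sigma_{B^n}$ and dividing by $n$, the correction vanishes and the $\liminf$ bound follows.

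The technical heart, and the step I expect to be most delicate, is exactly this control of the complexity of the optimal reference: the reduction to permutation-invariant states and the universal dominating state $\omega_{B^n}$ are what turn the otherwise $O(n)$ pinching loss into an $o(n)$ one. The pinching bound and the additivity of $\widetilde{I}_{\alpha}$ enter as known ingredients, to be cited alongside \eqref{eq:measured-to-sandwich}.
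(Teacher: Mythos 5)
Your proof is correct and follows essentially the same route as the paper's: reduce to a permutation-invariant reference state, control its number of distinct eigenvalues via Schur--Weyl, apply the pinching inequality, and invoke additivity of $\widetilde{I}_{\alpha}$. The only (harmless) detour is the universal dominating state $\omega_{B^n}$: the paper instead uses directly that any permutation-invariant state already has only $\poly(n)$ distinct eigenvalues, so one can pinch with respect to the minimizer itself and skip the domination step and its extra $\log \poly(n)$ loss.
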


\begin{proof}
    The direction `$\leq$' is a consequence of the data-processing inequality together with the fact that $\widetilde{I}_{\alpha}(\cdot; \cdot)$ is additive under tensor products~\cite[Lemma 7]{hayashitomamichel16}. To prove the reverse direction `$\geq$' first observe that by applying the twirling map with respect to permutations of the $n$ systems and using the data-processing inequality we find that the function
    \begin{align}
        \sigma_{B^n} \mapsto \frac{1}{n} \widecheck{D}_{\alpha}\big( \rho_{AB}^{\otimes n} \big\| \rho_{A}^{\otimes n} \otimes \sigma_{B^n} \big) 
    \end{align}
    has a minimizer $\sigma_{B^n}$ that is invariant under permutations of subsystems. Let us call this permutation invariant minimizer $\omega_{B^n}$. Note that the number of distinct eigenvalues of $\omega_{B^n}$ satisfies $|\spec(\omega_{B^n})| = \poly(n)$; see, e.g., ~\cite[Lemma~2.4]{berta21}.
    Next, observe that applying the pinching map $\cP_{\rho_{A}^{\otimes n} \otimes \omega_{B^n}}(\cdot)$ on the first argument forces the resulting states to commute and is thus equivalent to a measurement in the joint eigenbasis. Therefore,  
    \begin{align}
    \widecheck{I}_{\alpha}\big( A^n ; B^n \big)_{\rho^{\otimes n}}
        & \geq \widetilde{D}_{\alpha}\left( \cP_{\rho_{A}^{\otimes n} \otimes \tau_{B^n}}\big(\rho_{AB}^{\otimes n}\big) \middle\| \rho_{A}^{\otimes n} \otimes \omega_{B^n} \right) \\
        & \geq \widetilde{D}_{\alpha} \big( \rho_{AB}^{\otimes n} \big\| \rho_{A}^{\otimes n} \otimes \omega_{B^n} \big) - \log |\spec(\omega_{B^n})| \label{eq:pinching-ineq} \\
        & \geq \widetilde{I}\big( A^n ; B^n \big)_{\rho^{\otimes n}} - \log \poly(n) \\
        & = n \widetilde{I}( A ; B)_{\rho} - \log \poly(n) ,
    \end{align}
    where~\eqref{eq:pinching-ineq} is established using~\cite[Lemma 3]{hayashitomamichel16} and the final equality is due to the additivity of $\widetilde{I}_{\alpha}(\cdot; \cdot)$ under tensor products. The proof concludes after dividing by $n$ and taking the limit $n \to +\infty$ on both sides.
\end{proof}

\begin{proof}[Proof of Theorem~\ref{thm:asymptotic}]
    We apply Theorem~\ref{thm:one-shot-error} with $P_{X^n}=P_X^n$ being i.i.d according to some distribution $P_X$ that is still to be determined. We also let $M = \exp(nR)$. This yields
    a code satisfying
    \begin{align}
        -\frac{1}{n} \log p_{\textnormal{err}}(\cC_n, \cW^{\otimes n}) \geq \frac{1-\alpha}{\alpha} \left( \frac{1}{n} \widecheck{I}_{\alpha}(X^n; B^n)_{\rho^{\otimes n}} - R \right).
    \end{align}
    Taking the limit $n \to +\infty$ and using Lemma~\ref{lem:measured-limit} lead us to the bound
    \begin{align}
        \lim_{n \to \infty} -\frac{1}{n} \log p_{\textnormal{err}}(\cC_n, \cW^{\otimes n}) \geq \frac{1-\alpha}{\alpha} \left( \widetilde{I}_{\alpha}(X; B)_{\rho} - R \right) \,.
    \end{align}
    The desired result then follows by optimising over the choice of the distribution $P_X$ and $\alpha \in \big[\frac12, 1\big)$ and using~\eqref{eq:equivalent}.
\end{proof}

\section{Error exponents for entanglement-assisted channel coding}\label{sec:EA}

Entanglement-assisted communication can be seen as a natural fully quantum generalisation of classical channel coding~\cite{bennett02}. A quantum channel $\cN$ is a completely positive trace-preserving (cptp) map from quantum states on an input space $A$ (Alice) to quantum states on an output space $B$ (Bob). We are interested in sending a uniformly distributed message $m \in \cM$ over this channel. An \emph{entanglement-assisted channel code} $\cC$ for a quantum channel is comprised of a state $\tau_{\bar{A}\bar{B}}$ shared between Alice and Bob, an encoder $\{ \cE_m : m \in \cM \}$ where $\cE_m$ are cptp maps from $\bar{A}$ to $A$ and a decoder $\{ \Pi_m : m \in \cM \}$ forming a positive operator-valued measurement on $\bar{B}B$. The average probability of error is then given by
\begin{align}
    p_{\textnormal{err}}(\cC, \cN) = \frac{1}{M} \sum_{m \neq m'} \tr \left[ \Pi_{m'} \left( (\cN \circ \cE_m) \otimes \mathcal{I} \right) \phi_{\bar{A}\bar{B}} \right].
\end{align}

\subsection{One-shot bound}

We first derive the following one-shot bound in terms of the measured mutual information.

\begin{theorem}
    Let $M \in \mathbb{N}$, $\rho_{A}$ be a quantum state and $\alpha \in \big[\frac12, 1\big)$. There exists an entanglement-assisted code $\cC$ over $\cN$ satisfying
    \begin{align}
        p_{\textnormal{err}}(\cC, \cN) \leq \exp\left( \frac{1-\alpha}{\alpha} \Big( \log M - \widecheck{I}_{\alpha}(A'; B)_{\rho} \Big) \right)\,,
    \end{align}
    where $\rho_{A'B} = \mathcal{I} \otimes \cN\big( \proj{\rho}_{A'A} \big)$ and $|\rho\rangle_{A'A}$ is any purification of $\rho_A$ on $A' \sim A$.
\end{theorem}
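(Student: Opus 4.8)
The plan is to mimic the proof of Theorem~\ref{thm:one-shot-error} using \emph{position-based coding}, which turns the entanglement-assisted problem into the task of discriminating a symmetric ensemble of states with exactly the product structure needed for the measured mutual information to appear. Concretely, I would let Alice and Bob share $M$ independent copies of the purification, $\bigotimes_{m=1}^{M}\proj{\rho}_{A_m'A_m}$, with Bob holding all the references $A_1',\dots,A_M'$ and Alice all the $A_m$. To send message $m$, Alice feeds $A_m$ into $\cN$ and discards the rest, so that (up to relabelling) Bob holds
\begin{align}
\Sigma_m = \rho_{A_m'B}\otimes\bigotimes_{j\neq m}\rho_{A_j'},
\end{align}
where $\rho_{A_m'B}=\mathcal{I}\otimes\cN(\proj{\rho}_{A_m'A_m})$ and $\rho_{A_j'}=\tr_B\rho_{A'B}$. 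The crucial feature is that at every \emph{wrong} position the joint reference--output state is the product $\rho_{A'}\otimes\rho_B$, which is precisely the second argument appearing in $\widecheck{I}_\alpha(A';B)$.

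For the decoder I would use the new matrix quotient of Eq.~\eqref{eq:int-matrix-quotient}, setting $\Pi_m=Y_m/\sum_{k}Y_k$ with $Y_m=Z_{A_m'B}\otimes\bigotimes_{j\neq m}1_{A_j'}$ for a single-position operator $Z_{A'B}>0$ to be optimised; the identity on the spectator references is what will let the ``correct'' term collapse to a single copy. The analysis then follows the four steps of the classical-quantum proof: by permutation symmetry it suffices to bound $\tr[\Sigma_1\sum_{m\geq 2}\Pi_m]$; since $\sum_{m\geq2}\Pi_m\leq 1$ and $r:=\frac{1-\alpha}{\alpha}\in(0,1]$ we may pass to the $r$-th power; Lemma~\ref{lem:quotient} replaces the full denominator by $Y_1$, giving $\sum_{m\geq2}\Pi_m\leq (N/Y_1)^r$ with $N=\sum_{m\geq 2}Y_m$; and the linearity of the quotient~\eqref{eq:int-matrix-quotient} in its numerator lets me expand $N/Y_1=\sum_{m\geq2}Y_m/Y_1$. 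This produces a prefactor $(M-1)^r$, exactly as in~\eqref{eq:comp-power-classic}.

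The heart of the argument --- and the step I expect to be the main obstacle --- is reducing the resulting multipartite trace to the single-position quantity $\widecheck{Q}_\alpha(\rho_{A'B}\|\rho_{A'}\otimes\sigma_B)$. Here one must be careful: bounding each wrong-position numerator $Z_{A_m'B}$ crudely by its operator norm would decouple the spectator $A_m'$ but replace the reference by the identity, producing a divergence of $\rho_{A'B}$ against $1_{A'}\otimes\sigma_B$ rather than against the product of marginals $\rho_{A'}\otimes\sigma_B$, and hence not the mutual information. Instead I would integrate each spectator reference $A_m'$ against its marginal $\rho_{A'}$ \emph{before} invoking the operator monotonicity of $t\mapsto t^r$, so that the wrong-position operator is replaced by the $B$-marginal $\hat Z_B:=\tr_{A'}[(\rho_{A'}\otimes 1_B)Z_{A'B}]$. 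Exploiting the product structure of $\Sigma_1$ and of $Y_1=Z_{A_1'B}\otimes 1$, whose resolvent $(\lambda+Y_1)^{-1}=(\lambda+Z_{A_1'B})^{-1}\otimes 1$ acts trivially on the spectators, this should yield a bound of the form
\begin{align}
p_{\textnormal{err}} \leq (M-1)^{r}\,\tr\!\big[\rho_{A'B}\,Z^{-r}\big]\cdot\|\hat Z_B\|^{r},
\end{align}
the exact analogue of the classical-quantum bound with its factors $\tr[\rho_x Y^{-r}]$ and $\|\E_x Y_x\|$. Justifying that the non-commuting power $t\mapsto t^r$ does not spoil this spectator-tracing is the only genuinely new technical point.

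To finish, I would write $\|\hat Z_B\|=\max_{\sigma_B}\tr[(\rho_{A'}\otimes\sigma_B)Z]$, combine the two factors by Young's (weighted arithmetic--geometric mean) inequality, and exchange the infimum over $Z$ with the maximum over $\sigma_B$ via Sion's minimax theorem, using the operator convexity of $t\mapsto t^{-r}$ exactly as in~\eqref{eq:cq-bound-Y}. The variational formula~\eqref{eq:variational} then identifies the inner optimisation with $\max_{\sigma_B}\widecheck{Q}_\alpha(\rho_{A'B}\|\rho_{A'}\otimes\sigma_B)^{1/\alpha}$, and since $\widecheck{Q}_\alpha=\exp((\alpha-1)\widecheck{D}_\alpha)$ is decreasing in $\widecheck{D}_\alpha$ for $\alpha<1$, maximising over $\sigma_B$ gives $\exp\!\big(\frac{\alpha-1}{\alpha}\widecheck{I}_\alpha(A';B)_\rho\big)$ by the definition of the measured R\'enyi mutual information. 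Bounding $(M-1)^r\leq M^{r}$ and taking the logarithm yields the claimed bound. Everything outside the position-based reduction and the spectator-tracing step is formally identical to the classical-quantum analysis.
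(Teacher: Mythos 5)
Your proposal is correct and follows essentially the same route as the paper's proof: position-based coding with $M$ copies of $\proj{\rho}_{A'A}$, the new matrix-quotient decoder with single-position operators $Y_{AB}$, passing to the $r$-th power, Lemma~\ref{lem:quotient}, tracing the spectator references against their marginals \emph{before} applying operator monotonicity so that the wrong-position term becomes $\tr_{A}[\rho_A Y_{AB}]$, and then Young's inequality, Sion's minimax theorem and the variational formula~\eqref{eq:variational}. The ``spectator-tracing'' step you single out as the main technical point is handled in the paper exactly as you describe, so no genuinely new ingredient is needed beyond what you wrote.
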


\begin{proof}
    Our one-shot bound uses a coding strategy based on position-based coding~\cite{anshu19b}. We set
    \begin{align}
        \tau_{\bar{A}\bar{B}} = \rho_{AA'}^{\otimes M}, \quad \textnormal{where} \quad \bar{A} = {A}_1 \otimes \cdots \otimes {A}_M \sim A^{\otimes M} \quad \textnormal{and} \quad \bar{B} = {B}_1 \otimes \cdots \otimes {B}_M \sim A^{\otimes M} \,.
    \end{align}
    The encoder $\cE_m$ simply prepares the $m$-th subsystem $A_m$ as the channel input $A$. The decoder is parameterised by a positive definite operator $Y_{AB}$ and
    \begin{align}
        \Pi_m = \frac{Y_{\bar{B} B}^m}{\sum_{m'} Y_{\bar{B} B}^{m'}}, \quad
        \textnormal{where} \quad
        Y_{\bar{B} B}^m = 1_{B_1} \otimes \cdots \otimes 1_{B_{m-1}} \otimes Y_{B_m B} \otimes 1_{B_{m+1}} \cdots\otimes 1_{B_M}.
    \end{align}
    Here, $Y_{B_m B}$ is the embedding of $Y_{AB}$ into $\bar{B}B$ that acts non-trivially only on the subsystems $B_m$ and $B$. Setting $r = \frac{\alpha-1}{\alpha} \in (0, 1]$ and leveraging on the symmetry of the problem, we can then analyse the error as
    \begin{align}
        p_{\textnormal{err}} &= \tr \left[ \rho_{{B}_1 B} \otimes \rho_{{B}_2} \otimes \cdots 
 \rho_{{B}_M} \left( \frac{\sum_{m' \neq 1} Y_{\bar{B} B}^m}{\sum_{m'} Y_{\bar{B} B}^{m'}} \right) \right] \\
        &= \tr \left[ \rho_{{B}_1 B} \left( \tr_{B_2 \ldots B_M} \left[ 1_{B_1B} \otimes \rho_{{B}_2} \otimes \cdots 
 \rho_{{B}_M} \left( \frac{\sum_{m' \neq 1} Y_{\bar{B} B}^m}{\sum_{m'} Y_{\bar{B} B}^{m'}} \right) \right] \right) \right] \\
        &\leq \tr \left[ \rho_{{B}_1 B} \left( \tr_{B_2 \ldots B_M} \left[ 1_{B_1B} \otimes \rho_{{B}_2} \otimes \cdots 
 \rho_{{B}_M} \left( \frac{\sum_{m' \neq 1} Y_{\bar{B} B}^m}{\sum_{m'} Y_{\bar{B} B}^{m'}} \right) \right] \right)^r \right] \\
        &\leq \tr \left[ \rho_{{B}_1 B} \left( \tr_{B_2 \ldots B_M} \left[ 1_{B_1B} \otimes \rho_{{B}_2} \otimes \cdots 
 \rho_{{B}_M} \left( \frac{\sum_{m' \neq 1} Y_{\bar{B} B}^m}{Y_{\bar{B} B}^{1}} \right) \right] \right)^r \right] \\
 &= \tr \left[ \rho_{{B}_1 B} \left( \sum_{m' \neq 1} \tr_{B_{m'}} \left[ 1_{B_1B} \otimes \rho_{{B}_{m'}} \frac{ 1_{B_1} \otimes Y_{B_{m'} B}}{Y_{B_1 B} \otimes 1_{B_{m'}}} \right] \right)^r \right] \\
 &= (M-1)^r \tr \left[ \rho_{B_1 B} \left(  \frac{ 1_{B_1} \otimes \tr_{A} \left[ \rho_{A} Y_{A B} \right] }{Y_{B_1 B}} \right)^r \right] \\
 &\leq (M-1)^r \tr \left[ \rho_{A B} Y_{AB}^{-r} \right] \cdot \| \tr_{A} \left[ \rho_{A} Y_{A B} \right] \|^r \,.
    \end{align}
    In the above we use the fact that $$\tr_{B_2 \ldots B_M} \Big[ \rho_{{B}_2} \otimes \cdots 
 \rho_{{B}_M} \frac{\sum_{m' \neq 1} Y_{\bar{B} B}^m}{\sum_{m'} Y_{\bar{B} B}^{m'}} \Big] \leq 1_{B_1B},$$ 
 to verify the first inequality, and Lemma~\ref{lem:quotient} and the operator monotonicity of $t\mapsto t^r$ to verify the second inequality. Now minimising over $Y_{AB}$ and using the weighted arithmetic-geometric mean inequality (Young's inequality), we arrive at
 \begin{align}
     p_{\textnormal{err}} &\leq (M-1)^r \inf_{Y_{AB} > 0} \tr \left[ \rho_{A B} Y_{AB}^{-r} \right] \cdot \| \tr_{A} \left[ \rho_{A} Y_{A B} \right] \|^r \\
     &= (M-1)^r \left( \inf_{Y_{AB} > 0} \left( \tr \left[ \rho_{A B} Y_{AB}^{-r} \right] \right)^{\alpha} \cdot \| \tr_{A} \left[ \rho_{A} Y_{A B} \right] \|^{1-\alpha} \right)^{\frac{1}{\alpha}} \\
     &\leq (M-1)^r \left( \inf_{Y_{AB} > 0} \alpha \tr \left[ \rho_{A B} Y_{AB}^{-r} \right] + (1-\alpha) \| \tr_{A} \left[ \rho_{A} Y_{A B} \right] \| \right)^{\frac1{\alpha}} \\
     &= (M-1)^r \left( \inf_{Y_{AB} > 0} \max_{\sigma_B \geq 0 \atop \tr[\sigma_B] = 1} \alpha \tr \left[ \rho_{A B} Y_{AB}^{-r} \right] + (1-\alpha) \tr \left[ \rho_A \otimes \sigma_B Y_{A B} \right] \right)^{\frac1{\alpha}} \,.
 \end{align}
Now looking at the inner optimisation we note that Sion's minimax theorem~\cite{sion58} applies since the functional is linear in $\sigma$ and convex in $Y_{AB}$ due to the operator convexity of $t\mapsto t^{-r}$. Thus, we have
 \begin{align}
     p_{\textnormal{err}} &\leq (M-1)^r \left(  \max_{\sigma_B \geq 0 \atop \tr[\sigma_B] = 1} \inf_{Y_{AB} > 0} \alpha \tr \left[ \rho_{A B} Y_{AB}^{-r} \right] + (1-\alpha) \tr \left[ \rho_A \otimes \sigma_B Y_{A B} \right] \right)^{\frac1{\alpha}} \\
     &= (M-1)^r \left( \max_{\sigma_B \geq 0 \atop \tr[\sigma_B] = 1} \widecheck{Q}_{\alpha}(\rho_{AB} \| \rho_A \otimes \sigma_B) \right)^{\frac1{\alpha}} \,,
 \end{align}
 where we used the variational formula in~\eqref{eq:variational} in the last step. Since $M-1 \leq M$ this yields the desired bound after some rewriting.
\end{proof}

\subsection{Asymptotic bound}

The error exponent for entanglement-assisted channel coding is formally defined, for each rate $R > 0$, as
\begin{align}
    E(R, \cN) := \sup_{ \{\cC_n:\, n\} } \left\{ \liminf_{n\to\infty} -\frac{\log p_{\textnormal{err}}(\cC_n, \cN^{\otimes n})}{n}  :\,\, \liminf_{n\to\infty} \frac{\log |\cC_n|}{n} \geq R \right\} ,
\end{align}
where the optimisation is over sequences of codes $\{\cC_n:\, n\geq 1\}$. Lemma~\ref{lem:measured-limit} together with an argument analogous to the proof of Theorem~\ref{thm:asymptotic} then directly yields a lower bound on the error exponent.

\begin{theorem}
    Let $R > 0$. Then,
    \begin{align}
       & E(R, \cN) \geq \sup_{\alpha \in [\frac12, 1)} \left\{  \frac{1-\alpha}{\alpha} \big( \widetilde{I}_{\alpha}(\cN) - R \big) \right\} \,.
\end{align}
\end{theorem}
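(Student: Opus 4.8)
The plan is to mirror the proof of Theorem~\ref{thm:asymptotic} essentially verbatim, substituting the preceding one-shot entanglement-assisted bound for Theorem~\ref{thm:one-shot-error} and invoking Lemma~\ref{lem:measured-limit} to perform the regularisation. Concretely, I would first fix an input state $\rho_A$ and an order $\alpha \in [\frac12, 1)$, and apply the one-shot entanglement-assisted theorem to the $n$-fold channel $\cN^{\otimes n}$ with input state $\rho_A^{\otimes n}$ and $M = \exp(nR)$ messages. Here it is important to note that a purification of $\rho_A^{\otimes n}$ may be taken as $\ket{\rho}_{A'A}^{\otimes n}$, so that the relevant channel-output state is exactly the tensor power $\rho_{A'B}^{\otimes n}$, where $\rho_{A'B} = \mathcal I \otimes \cN\big(\proj{\rho}_{A'A}\big)$. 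The one-shot bound then produces a code $\cC_n$ satisfying
\begin{align}
 -\frac{1}{n}\log p_{\textnormal{err}}(\cC_n, \cN^{\otimes n}) \;\geq\; \frac{1-\alpha}{\alpha}\left(\frac{1}{n}\,\widecheck{I}_{\alpha}\big({A'}^{n}; B^n\big)_{\rho^{\otimes n}} - R\right).
\end{align}

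Second, I would take $n \to \infty$ and apply Lemma~\ref{lem:measured-limit} (with the generic system $A$ of that lemma identified with $A'$ here) to obtain
\begin{align}
 \lim_{n\to\infty}\frac{1}{n}\,\widecheck{I}_{\alpha}\big({A'}^{n}; B^n\big)_{\rho^{\otimes n}} \;=\; \widetilde{I}_{\alpha}(A'; B)_{\rho}.
\end{align}
Since $E(R,\cN)$ is defined via a $\liminf$ over sequences of codes, exhibiting this particular sequence $\{\cC_n\}$ with a genuine limit already lower-bounds $E(R,\cN)$, giving $E(R,\cN) \geq \frac{1-\alpha}{\alpha}\big(\widetilde{I}_{\alpha}(A'; B)_{\rho} - R\big)$ for the chosen $\rho_A$ and $\alpha$.

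Finally, I would optimise the free parameters. The choice of input state $\rho_A$ was arbitrary, so taking the supremum over $\rho_A$ promotes $\widetilde{I}_{\alpha}(A'; B)_{\rho}$ to the channel quantity $\widetilde{I}_{\alpha}(\cN) = \max_{\rho_A}\widetilde{I}_{\alpha}(A'; B)_{\tau}$, and taking the supremum over $\alpha \in [\frac12, 1)$ then yields the claimed bound.

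I expect no genuine obstacle here, as both ingredients are already in hand; the only points requiring care are bookkeeping rather than analysis. The first is the correct identification of the tensor-power structure when feeding $\cN^{\otimes n}$ into a single-shot statement (purification of a product state as a product of purifications, so that the output is precisely $\rho_{A'B}^{\otimes n}$). The second is simply that the permutation-symmetrisation and pinching arguments that make the regularisation work are entirely internal to Lemma~\ref{lem:measured-limit}, so they need not be reproduced. Thus the proof reduces to assembling these two results and optimising, exactly as in the classical-quantum case.
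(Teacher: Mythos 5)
Your proposal is correct and is exactly the argument the paper intends: the paper itself only remarks that the result follows from Lemma~\ref{lem:measured-limit} ``together with an argument analogous to the proof of Theorem~\ref{thm:asymptotic}'', which is precisely what you carry out (apply the one-shot entanglement-assisted bound to $\cN^{\otimes n}$ with the product purification so the output is $\rho_{A'B}^{\otimes n}$, regularise via Lemma~\ref{lem:measured-limit}, then optimise over $\rho_A$ and $\alpha$). The bookkeeping points you flag are the right ones, and no further ingredient is needed.
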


\section{Conclusion}\label{sec:conclusion}

We have introduced a new type of pretty good measurement decoder and showed how it can be used to give lower bounds on the error exponents for classical-quantum and entanglement-assisted channel coding that are tight in the classical commutative case. We believe that our decoder and techniques will find applications beyond these two problems.

An example of such a problem is source coding with quantum side information, where a memoryless source produces classical-quantum states $\rho_{XB}$ where the classical source $X$ needs to be compressed so that it can be recovered using the quantum side information $B$. This problem was first studied in the asymptotic setting~\cite{devetak03} and the finite resource trade-offs have been investigated in~\cite{tomamichel12} as well as~\cite{cheng21} and~\cite{renes23}. The latter two works establish that the error exponent of this problem for rates $R$ above but close to the conditional entropy is equal to
\begin{align}
    \sup_{\alpha \in [\frac12, 1)} \frac{1-\alpha}{\alpha} \left( R - \widebar{H}^{\uparrow}_{\alpha}(X|B)_{\rho} \right) \label{eq:compression}, \quad \textrm{where} \quad \widebar{H}^{\uparrow}_{\alpha}(X|B)_{\rho} = - \min_{\sigma_B \geq 0 \atop \tr[\sigma_B] = 1} \widebar{D}_{\alpha}(\rho_{XB} \| 1_X \otimes \sigma_B) \,.
\end{align}
Our techniques allow us to derive a one-shot lower bound in terms of the measured R\'enyi conditional entropy and asymptotic bounds of the same form as~\eqref{eq:compression} but in terms of the sandwiched R\'enyi conditional entropy. We note that the asymptotic bound that follows from our techniques is worse than the one in~\cite[Section 3.1]{renes23} for general states.

\paragraph*{Acknowledgements:}
MT thanks Hao-Chung Cheng for many insightful and encouraging discussions about the error exponent problem. SB is thankful to Mohammad Hossein Yassaee for sharing some recent works on the error exponent problem for classical channels. SB and TM are supported by the NRF grant NRF2021-QEP2-02-P05 and the
Ministry of Education, Singapore, under the Research Centres of
Excellence program.

\bibliographystyle{ultimate}
\bibliography{ref_auto_mt,ref_manual_mt}

\appendix

\section{Relationships between different bounds on the error exponent} \label{app:hayashi}

We have seen that~\cite{dalai13} and our work establish the following lower and upper bounds for the error exponent of cq channel coding:
\begin{align}
    &\sup_{\alpha \in [\frac12, 1)}  \frac{1-\alpha}{\alpha} \big( \max_{P_X} \min_{\sigma_B} \widetilde{D}_{\alpha}(\rho_{XB} \| \rho_X \otimes \sigma_B) - R \big) \leq E(R) \\
    &\qquad \qquad \leq \sup_{\alpha \in (0, 1)}  \frac{1-\alpha}{\alpha} \big( \max_{P_X} \min_{\sigma_B} \widebar{D}_{\alpha}(\rho_{XB} \| \rho_X \otimes \sigma_B)  - R \big) \label{eq:spherepack}.
\end{align}
Clearly we cannot expect this bound to be tight for general channels since $\widetilde{D}(\rho\|\sigma) \leq \widebar{D}_{\alpha}(\rho\|\sigma)$ with equality only if $\rho$ and $\sigma$ commute as can be seen from~\cite[Lemma 3]{datta13} together with the equality conditions for the Araki-Lieb-Thirring inequality~\cite{hiai94}.
It is a bit less obvious how to compare to the lower bound in~\cite{hayashi07},
\begin{align}
    E(R) &\geq \sup_{\beta \in (0, 1)} (1-\beta) \big( \max_{P_X} \widebar{D}_{\beta}(\rho_{XB} \| \rho_X \otimes \rho_B) - R \big) \\
    &= \sup_{\alpha \in (\frac12, 1)} \frac{1-\alpha}{\alpha} \big( \max_{P_X} \widebar{D}_{2 - \frac{1}{\alpha}}(\rho_{XB} \| \rho_X \otimes \rho_B) - R \big), \label{eq:hayashi2}
\end{align}
since there is no optimisation over the state $\sigma_B$ here.
By introducing a purification of $\rho_{XB}$ in the form $|\rho\rangle_{XX'BC} = \sum_x \sqrt{P(x)} |x\rangle_X \otimes |x\rangle_{X'} \otimes |\rho^x\rangle_{BC}$, we can rewrite~\eqref{eq:spherepack} and~\eqref{eq:hayashi2} using the identities~\cite[Lemma 6]{hayashitomamichel16} 
\begin{align}
    \min_{\sigma_B} \widebar{D}_{\alpha}(\rho_{XB} \| \rho_X \otimes \sigma_B) &= - \widetilde{D}_{\frac{1}{\alpha}}\big(\rho_{XX'C} \big\| \rho_{X}^{-1} \otimes \rho_{X'C} \big) \quad \textnormal{and} \\
    \widebar{D}_{2 - \frac{1}{\alpha}}(\rho_{XB} \| \rho_X \otimes \rho_B) &= - \widebar{D}_{\frac{1}{\alpha}}\big(\rho_{XX'C} \big\| \rho_{X}^{-1} \otimes \rho_{X'C} \big) ,
\end{align}
respectively. From this it becomes evident that the bound in~\cite{hayashi07} is not tight in general since equality only holds if $\rho_{XX'C}$ commutes with $\rho_{X}^{-1} \otimes \rho_{X'C}$. This occurs, for example, when all states $\rho_x$ are pure, since in that case it is easy to verify that $\rho_{XX'}$ commutes with $\rho_{X}^{-1} \otimes \rho_{X'}$ for any input distribution.

\section{Alternative proof of Hayashi's bound}\label{app:Hayashi's-bound-alt}

For two positive semidefinite operators $A, B$ we define 
\begin{align}
D_2^{\iota}(A\| B) := \log Q_2^{\iota}(A\| B),  
\end{align}
with
\begin{align}
Q_2^{\iota}(A\| B):= \tr\left[ A \frac{A}{B}  \right] = \int_0^\infty \tr\left[ A \frac{1}{t+B} A \frac{1}{t+B}  \right] \d t,
\end{align}
if the support of $A$ is included in the support of $B$ and $Q_2^{\iota}(A\|B)=+\infty$ otherwise~\cite{HircheTomamichel2023}.
We note that in the commutative case, $D_2^{\iota}(A\| B)$ reduces to the collision relative entropy. It is shown in~\cite[Example 3.11]{HiaiPetz2013} that $Q_2^{\iota}(A\| B)$ is jointly convex, and then satisfies the data processing inequality. Hence, for any cptp map $\Phi$ we have 
\begin{align}\label{eq:dpi-iota}
D_2^{\iota}(\Phi(A)\| \Phi(B))\leq D_2^{\iota}(A\| B).
\end{align} 
We use this data processing inequality for the proof of Lemma~\ref{lem:int-measurement-error-bound}.
The proof closely follows the footsteps of the proof of~\cite[Lemma 17]{Cheng22}.

\begin{proof}[Proof of Lemma~\ref{lem:int-measurement-error-bound}]
 For any projection $\Pi$ let $\widehat \Pi = \Pi \oplus (1-\Pi)$ be a projection acting on $\cH\oplus \cH$. Define 
\begin{align}
\Phi(X) = \tr\left[X \widehat \Pi\right] \ketbra{0}{0} + \tr\left[ X (1- \widehat \Pi) \right]\ketbra{1}{1}. 
\end{align}
Note that $\Phi$ maps operators acting on $\cH\oplus \cH = \mathbb C^2\otimes \cH$ to \emph{diagonal} operators on $\mathbb C^2$, and is a cptp map. Then, the data processing inequality~\eqref{eq:dpi-iota} yields 
\begin{align}
\tr\left[A\frac{A}{A+B}\right] + \tr\left[B\frac{B}{A+B}\right] & = Q_2^{\iota}(A\|A+B) + Q_2^{\iota}(B\|A+B)\\
& = Q_2^{\iota}\left(A\oplus B\| (A+B)^{\oplus 2}\right)\\
& \geq Q_2^{\iota}\left( \Phi(A\oplus B)\| \Phi\left((A+B)^{\oplus 2}\right)\right)\\
& = \frac{\tr\left[  A\oplus B \widehat \Pi   \right]^2}{  \tr\left[  (A+B)^{\oplus 2} \widehat \Pi  \right] } + \frac{\tr\left[  A\oplus B \left(1- \widehat \Pi\right)   \right]^2}{  \tr\left[  (A+B)^{\oplus 2} \left(1-\widehat \Pi \right) \right] }.
\end{align}
Let $\Pi$ be a projection that satisfies $\|A-B\|_1 = \tr[(A-B)(2\Pi-1)]$. Then, we have 
\begin{align}
\tr[A\oplus B\widehat \Pi] &= \tr[A \Pi] + \tr[B (1-\Pi)] = \frac{1}{2}\left(  \tr[A+B] + \|A-B\|_1  \right),\\
\tr\left[  (A+B)^{\oplus 2} \widehat \Pi  \right]& =  \tr[(A\oplus B) \Pi] + \tr[(A\oplus B) (1-\Pi)] = \tr[A\oplus B] = \tr[A+B].
\end{align}
Therefore,  
\begin{align}
&\tr\left[A\frac{A}{A+B}\right] + \tr \left[B\frac{B}{A+B}\right] \\
&\qquad \geq   \frac{\left( \frac{1}{2}\left(  \tr[A+ B] + \|A-B\|_1  \right)     \right)^2   + \left(   \frac{1}{2}\left(  \tr[A+ B] - \|A-B\|_1  \right)  \right)^2  }{  \tr[A+B] }.
\end{align}
Subtracting $\tr[A+B]$ from both sides and using $\frac{A}{A+B}+\frac{B}{A+B}=1$, we obtain
\begin{align}
& 2\tr\left[A\frac{B}{A+B}\right] \\
&\qquad =\tr\left[A\frac{B}{A+B}\right] + \tr \left[B\frac{A}{A+B}\right] \\
&\qquad \leq   \frac{ \tr[A+B]^2  - \left( \frac{1}{2}\left(  \tr[A+ B] + \|A-B\|_1  \right)     \right)^2   - \left(   \frac{1}{2}\left(  \tr[A+ B] - \|A-B\|_1  \right)  \right)^2  }{  \tr[A+B] }\\
&\qquad =   \frac{ 2\left( \frac{1}{2}\left(  \tr[A+ B] + \|A-B\|_1  \right)     \right)\cdot \left(   \frac{1}{2}\left(  \tr[A+ B] - \|A-B\|_1  \right)  \right)  }{  \tr[A+B] }\\
&\qquad \leq    \tr[A+ B] - \|A-B\|_1, 
\end{align}
where in the last inequality we use $\|A-B\|_1\leq \tr[A+B]$. 
\end{proof}

We now give an alternative proof of~\eqref{eq:hayashi} using the above lemma. Suppose that the codewords $x_1, \dots, x_M$ are chosen independently at random according to an input distribution $P_X$. For the decoder we use~\eqref{eq:def-Pi-Y} with the choice of $Y_x = \rho_x$:
\begin{align}\label{eq:int-pgm-decoder-22}
\Pi_m = \frac{\rho_{x_m}}{\sum_{m'} \rho_{x_{m'}}}.
\end{align}
Then, applying Lemma~\ref{lem:int-measurement-error-bound}, for any $\alpha\in (0,1)$ the average probability of error is bounded by
\begin{align}
\mathbb E_{x_1, \dots, x_M}[p_{\textnormal{err}}] &= \mathbb E_{x_1, \dots, x_M} \tr\left[ \rho_{x_1}   \frac{\sum_{m\neq 1}  \rho_{x_m} }{  \sum_{m'} \rho_{x_{m'}} }  \right] \\
& \leq \frac{1}{2} \mathbb E_{x_1, \dots, x_M} \bigg( \tr[\rho_{x_1}] + \tr\bigg[\sum_{m'\neq 1} \rho_{x_m'}\bigg]  - \bigg\| \rho_{x_1} - \sum_{m'\neq 1} \rho_{x_m'}  \bigg\|_1 \bigg) \label{eq:one-shot-bound-00}\\
& \leq \mathbb E_{x_1, \dots, x_M} \tr\bigg[ \rho_{x_1}^{\alpha} \bigg(  \sum_{m'\neq 1} \rho_{x_m'}  \bigg)^{1-\alpha}  \bigg]\\
& \leq \mathbb E_{x_1} \tr\bigg[ \rho_{x_1}^{\alpha} \bigg( \mathbb E_{x_2, \dots, x_M} \sum_{m'\neq 1} \rho_{x_m'}  \bigg)^{1-\alpha}  \bigg]\\
& = (M-1)^{1-\alpha} \mathbb E_{x_1}  \tr\left[ \rho_{x_1}^\alpha \rho_B^{1-\alpha} \right],
\end{align}
where in the third line we use~\cite[Theorem 1]{audenaert07} and in the fourth line we use the operator concavity of $t\mapsto t^{1-\alpha}$. It is not hard to verify that this bound is equivalent to~\eqref{eq:hayashi}.

We remark that not only the bound of~\eqref{eq:hayashi} on the error exponent can be proven by the decoder~\eqref{eq:int-pgm-decoder-22}, as~\eqref{eq:one-shot-bound-00} shows, the one-shot bounds of~\cite{Cheng22}, particularly~\cite[Theorem 1]{Cheng22}, can also be proven by using this decoder. This means that the decoders defined in terms of the matrix quotient~\eqref{eq:int-matrix-quotient} satisfy some of the main advantages of the standard pretty-good measurement, yet they satisfy an intriguing property stated in Lemma~\ref{lem:quotient}.

\end{document}